\newtheorem{theorem}{Theorem}
\newtheorem{lemma}{Lemma}
\newtheorem{proof}[theorem]{Proof}
\begin{document}
\title{Coding Delay Analysis of Chunked Codes \\ over Line Networks${}^{\dagger}$\thanks{$^{\dagger}$This paper is an extended version of an accompanying work submitted to NetCod 2012.}}

\author{\IEEEauthorblockN{Anoosheh~Heidarzadeh and Amir H. Banihashemi\\}
\IEEEauthorblockA{\small{Department of Systems and Computer Engineering, Carleton University, Ottawa, ON, Canada}\\
}\vspace{-0.5 cm}}

\maketitle
\thispagestyle{empty}

\begin{abstract}In this paper, we analyze the coding delay and the average coding delay of Chunked network Codes (CC) over line networks with Bernoulli losses and deterministic regular or Poisson transmissions. Chunked codes are an attractive alternative to random linear network codes due to their lower complexity. Our results, which include upper bounds on the delay and the average delay, are the first of their kind for CC over networks with such probabilistic traffics. These results demonstrate that a stand-alone CC or a precoded CC provides a better tradeoff between the computational complexity and the convergence speed to the network capacity over the probabilistic traffics compared to arbitrary deterministic traffics. The performance of CC over the latter traffics has already been studied in the literature.\end{abstract}

\vspace{-.35 cm}
\section{Introduction}
Chunked codes (CC), originally proposed in~\cite{MHL:2006}, generalize random linear network codes (dense codes), and operate by dividing the message of the source into non-overlapping or overlapping sub-messages of equal size, called \emph{chunks}~\cite{MHL:2006,SZK:2009,HBJ:2011}. Each node at each transmission time randomly chooses a chunk, and transmits it by using a dense code. In fact, a dense code is a CC with only one chunk of the size equal to the message size. Thus, CC require less complex coding operations due to applying coding on chunks smaller than the original message. This however comes at the cost of lower speed of convergence to the capacity compared to dense codes.

The speed of convergence of CC to the capacity of line networks with arbitrary deterministic traffics was studied in~\cite{MHL:2006,HBJ:2011}. In particular, it has been shown that (i) a CC achieves the capacity, so long as the size of the chunks is lower bounded by a function super-logarithmic in the message size and super-log-cubic in the network length, and (ii) a CC, preceded by a capacity-achieving erasure code, approaches the capacity with an arbitrarily small but non-zero constant gap, so long as the size of the chunks is lower bounded by a function constant in the message size and log-cubic in the network length. There is however no result on the speed of convergence of CC to the capacity over the networks with probabilistic traffics.

The speed of convergence of dense codes to the capacity of some probabilistic traffics was studied in~\cite{PFS:2005,DDHE:2009}. Very recently, in~\cite{HB1S:2012}, we studied the coding delay and the average coding delay of a dense code over the traffics with deterministic regular or Poisson transmissions and Bernoulli losses.\footnote{The \emph{coding delay} of a code over a network with a given traffic (schedule of transmissions and losses) is the minimum time that the code takes to transmit all the message vectors from the source to the sink. The coding delay is a random variable due to the randomness in both the code and the traffic. The \emph{average coding delay} of a code with respect to a class of traffics is the coding delay of the code averaged out over all the traffics (but not the codes), and hence is a random variable due to the randomness in the code.} The results were in some cases more general, and in some other cases tighter, than the existing bounds in~\cite{PFS:2005,DDHE:2009}.

%



In this paper, we generalize our analysis in~\cite{HB1S:2012}, and for the first time, study the coding delay and the average coding delay of CC for different ranges of the chunk sizes.\footnote{In this paper, we focus on CC with non-overlapping chunks. The analysis of CC with overlapping chunks is the focus of an ongoing research project.}


The main contributions of this work are:
\begin{itemize}
\item We derive upper bounds on the coding delay and the average coding delay of a CC alone, or a CC with precoding, over the traffics with deterministic regular transmissions or Poisson transmissions and Bernoulli losses with arbitrary parameters or unequal parameters.
\item We show that: (i) a CC achieves the capacity, so long as the size of the chunks is bounded from below by a function super-logarithmic in the message size and super-log-linear in the network length, and (ii) the combination of a CC and a capacity-achieving erasure code approaches the capacity with an arbitrarily small non-zero constant gap, so long as the size of the chunks is bounded from below by a function constant in the message size and log-linear in the network length. The lower bounds in both cases are smaller than those over the networks with arbitrary deterministic traffics. Thus both coding schemes are less computationally complex (require smaller chunks), for the same speed of convergence, over such probabilistic traffics, compared to arbitrary deterministic traffics.
\item In a capacity-achieving scenario, for such probabilistic traffics, we show that: (i) the upper bound on the overhead\footnote{The (\emph{average}) \emph{overhead} is the difference between the (average) coding delay and the ratio of the message size to the capacity.} grows sub-log-linearly with the message size and the network length, and decays sub-linearly with the size of the chunks, and (ii) the upper bound on the average overhead grows sub-log-linearly (or poly-log-linearly) with the message size, and sub-log-linearly (or log-linearly) with the network length, and decays sub-linearly (or linearly) with the size of the chunks, in the case with arbitrary (or unequal) parameters. For arbitrary deterministic traffics, the upper bound on the overhead was shown in~\cite{HBJ:2011} to be similar to (i), but with a larger (super-linear) growth rate with the network length.
\end{itemize}

\vspace{-.25 cm}
\section{Network Model and Problem Setup}
\subsection{Transmission and Loss Model}
We consider a unicast problem (one-source one-sink) over a line network with $L$ links connecting $L+1$ nodes $\{v_i\}_{0\leq i\leq L}$ in tandem. The source node $v_0$ has a message of $k$ vectors (called \emph{message vectors}) from a vector space $\mathcal{F}$ over $\mathbb{F}_2$, and the sink node $v_L$ requires all the message vectors.\footnote{The analysis in this paper is generalizable to finite fields of larger size.}

Each (non-sink) node at each transmission time transmits a (coded) packet, which is a vector in $\mathcal{F}$. The packet transmissions are assumed to occur in discrete-time, and the transmission times over different links are assumed to follow independent stochastic processes. The transmission times over the $i\textsuperscript{th}$ link are specified by (i) a deterministic process where there is a packet transmission at each time instant, or (ii) a Poisson process with parameter $\lambda_i: 0<\lambda_i\leq 1$, where $\lambda_i$ is the average number of transmissions per time unit over the $i\textsuperscript{th}$ link. The transmission schedules resulting from (i) and (ii) are referred to as \emph{deterministic regular} and \emph{Poisson}, respectively.

Each transmitted packet either succeeds or fails to be received (\emph{successful} vs. \emph{lost}). The successful packets are assumed to arrive with zero delay, and the lost packets will never arrive. The packets are assumed to be successful independently over different links. The successful packets over the $i\textsuperscript{th}$ link are specified by a Bernoulli process with (success) parameter $p_i: 0<p_i\leq 1$, where $p_i$ is the average number of successes per transmission over the $i\textsuperscript{th}$ link. The loss model defined as above is referred to as \emph{Bernoulli}. The special case of Bernoulli loss with all $p_i$'s equal to $1$ is analogous to the lossless case.

\vspace{-.35 cm}
\subsection{Problem Setup}
The goal in this paper is to derive upper bounds on the coding delay and the average coding delay of chunked codes over line networks with deterministic regular or Poisson transmissions and Bernoulli losses.

In a chunked coding scheme, the set of $k$ message vectors at the source node is divided into $q$ disjoint subsets, called \emph{chunks}, each of size $\alpha=k/q$. The source node, at each transmission time, chooses a chunk independently at random, and transmits a packet by randomly linearly combining the message vectors belonging to the underlying chunk. Each non-source non-sink node, at the time of each transmission, chooses a chunk independently at random, and transmits a packet by randomly linearly combining its previously received packets pertaining to the underlying chunk. The global encoding vector\footnote{The \emph{global encoding vector} of a packet is the vector of the coefficients representing the mapping between the message vectors and the packet.} of each packet is assumed to be transmitted along with the packet. The sink node can decode a chunk, so long as it receives an innovative\footnote{A collection of packets is \emph{innovative} if the global encoding vectors of the packets belonging to the collection are linearly independent.} collection of packets pertaining to the underlying chunk of a size equal to the size of the chunk.

\vspace{-.25 cm}
\section{Deterministic Regular Transmissions and Bernoulli Losses}\label{sec:BernoulliLossRegularTrafficCC}
We first review the analysis of dense codes, which are a special case of CC with one chunk, in two cases of arbitrary or unequal (success) parameters, presented in~\cite{HB1S:2012}.\footnote{The details of the proofs in the case of arbitrary parameters were given in~\cite{HB1S:2012} and hence omitted. However, neither the details, nor the sketches of the proofs in the case of unequal parameters were given in~\cite{HB1S:2012}. We present the sketches of the proofs in this paper for the purpose of completeness.} Next, we generalize the analysis to CC with more than one chunk.

\vspace{-.35cm}
\subsection{Dense Codes}\label{subsec:DC}
The goal of the analysis is to lower bound (i) the size of a maximal dense collection of packets at the sink node until a certain time,\footnote{A collection of packets is \emph{dense} if the local encoding vectors of the packets are linearly independent, where the local encoding vector of a packet is the vector of the coefficients of the linear combination pertaining to the packet.} and then, (ii) the probability that a sufficient number of packets in the underlying collection are innovative.

Let $Q_{i+1}$ and $Q_i$ be the decoding matrices\footnote{The global encoding vectors of the packets at a node form the rows of the \emph{decoding matrix} at the node.} at the $(i+1)\textsuperscript{th}$ and $i\textsuperscript{th}$ nodes, respectively, and $T_i$ be a matrix over $\mathbb{F}_2$ such that $Q_{i+1}=T_i Q_i$. The entries of $Q_{i+1}$ and $Q_i$ are in $\mathbb{F}_2$. Each row of $T_i$ is the local encoding vectors of a successful packet sent by the $i\textsuperscript{th}$ node. Let $Q'_i$ be $Q_i$ restricted to its rows corresponding to the global encoding vectors of the dense packets at the $i\textsuperscript{th}$ node. Let $T'_i$, the \emph{transfer matrix} at the $i\textsuperscript{th}$ node, be a matrix over $\mathbb{F}_2$ such that $Q_{i+1}=T'_i Q'_i$. Each row of $T'_i$ indicates the labels of the dense packets at the $i\textsuperscript{th}$ node which contribute to a successful packet sent by the $i\textsuperscript{th}$ node.

For every matrix $Q$ over $\mathbb{F}_2$, the \emph{density} of $Q$, denoted by $\mathcal{D}(Q)$, is the size of a maximal dense collection of rows in $Q$, where a collection of rows is \emph{dense} if the rows have all independent and uniformly distributed Bernoulli entries. Further, $Q$ is called a \emph{dense matrix} if all its rows form a dense collection. For every matrix $T$ over $\mathbb{F}_2$, the \emph{rank} of $T$, denoted by $\text{rank}(T)$, is the size of a maximal collection of linearly independent rows in $T$.

\begin{lemma}\label{lem:DensityTM} Let $Q$ be a dense matrix over $\mathbb{F}_2$, and $T$ be a matrix over $\mathbb{F}_2$, where the number of rows in $Q$ and the number of columns in $T$ are equal. If $\text{rank}(T)\geq \gamma$, then $\mathcal{D}(TQ)\geq \gamma$.\end{lemma}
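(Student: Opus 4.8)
The plan is constructive. Since $\text{rank}(T)\geq\gamma$, I would pick any $\gamma$ rows of $T$ that are linearly independent over $\mathbb{F}_2$ and let $\hat{T}$ be the $\gamma\times m$ submatrix of $T$ formed by these rows, where $m$ denotes the number of rows of $Q$ (equivalently, the number of columns of $T$). By construction $\hat{T}$ has full row rank $\gamma$. The corresponding $\gamma$ rows of $TQ$ are exactly the rows of $\hat{T}Q$, so it suffices to prove that $\hat{T}Q$ is a dense matrix, i.e., that all of its entries are independent and uniformly distributed Bernoulli random variables; this immediately gives $\mathcal{D}(TQ)\geq\gamma$.

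To establish that $\hat{T}Q$ is dense, I would argue column by column. Because $Q$ is a dense matrix, its $m\times n$ array of entries is i.i.d.\ $\mathrm{Bernoulli}(1/2)$; equivalently, the $n$ columns of $Q$ are mutually independent and each is uniformly distributed on $\mathbb{F}_2^{m}$. Writing $q$ for a generic column of $Q$, the corresponding column of $\hat{T}Q$ is $\hat{T}q$, a (linear, hence measurable) function of $q$ alone, so the columns of $\hat{T}Q$ are mutually independent. It then remains to show that $\hat{T}q$ is uniform on $\mathbb{F}_2^{\gamma}$ whenever $q$ is uniform on $\mathbb{F}_2^{m}$, and this is exactly where the rank hypothesis enters: since $\hat{T}$ has full row rank $\gamma$, the map $x\mapsto \hat{T}x$ from $\mathbb{F}_2^{m}$ onto $\mathbb{F}_2^{\gamma}$ is surjective, so every fibre $\{x:\hat{T}x=y\}$, $y\in\mathbb{F}_2^{\gamma}$, is a coset of $\ker\hat{T}$ and hence has the same cardinality $2^{m-\gamma}$. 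Pushing the uniform distribution on $\mathbb{F}_2^{m}$ forward through $\hat{T}$ therefore yields the uniform distribution on $\mathbb{F}_2^{\gamma}$, i.e., the $\gamma$ entries of $\hat{T}q$ are i.i.d.\ $\mathrm{Bernoulli}(1/2)$.

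Combining the two observations, the $\gamma n$ entries of $\hat{T}Q$ are mutually independent and uniform, so the $\gamma$ rows of $TQ$ indexed by the chosen rows of $T$ form a dense collection, whence $\mathcal{D}(TQ)\geq\gamma$. I do not expect a serious obstacle: the only step with real content is the equal-cardinality-of-fibres argument, and even that is standard linear algebra over $\mathbb{F}_2$; the rest (reducing to a full-rank $\gamma\times m$ submatrix, and passing from ``columns independent and uniform'' to ``all entries i.i.d.\ Bernoulli'') is bookkeeping. One minor point to watch is the reading of ``the size of a maximal dense collection of rows'' in the definition of $\mathcal{D}(\cdot)$: the construction exhibits a dense collection of size $\gamma$, which is all that $\mathcal{D}(TQ)\geq\gamma$ requires, whether ``maximal'' is read as ``maximum-size'' or merely ``inclusion-maximal''. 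The same argument works verbatim over any finite field with ``Bernoulli entries'' replaced by ``uniform field entries'', consistent with the paper's remark about generalization to larger fields.
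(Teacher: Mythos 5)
Your proof is correct: reducing to a full-row-rank $\gamma\times m$ submatrix $\hat{T}$ and using the equal-size-fibre (coset of $\ker\hat{T}$) argument to push the uniform distribution on $\mathbb{F}_2^m$ forward to a uniform distribution on $\mathbb{F}_2^{\gamma}$, column by independent column, is exactly the standard argument this lemma rests on; the paper itself omits the proof (deferring to its reference [HB1S:2012]), and your write-up is a faithful, self-contained version of that approach. The only point worth keeping in mind when the lemma is applied is that the transfer matrix $T'_i$ there is random, so one invokes your argument after conditioning on $T'_i$, which is legitimate because the local encoding coefficients are drawn independently of $Q'_i$.
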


Since $Q_{i+1}=T'_i Q'_i$, and $Q'_i$ is dense, $\mathcal{D}(Q_{i+1})$ is lower bounded so long as $\text{rank}(T'_i)$ is lower bounded. As shown in~\cite{HB1S:2012}, the matrix $T'_i$ includes a sub-matrix with the structure of a random block lower-triangular matrix, and the rank of a matrix with such a structure is lower bounded as follows.

Let $w$, $r$ and $\{r_j\}_{1\leq j\leq w}$ be arbitrary non-negative integers, and let $r_{\text{max}}=\max_j r_j$ and $r_{\text{min}}=\min_j r_j$. Let $T_{i,j}$ be an $r\times r_j$ dense matrix over $\mathbb{F}_2$, if $1\leq j\leq i\leq w$; or an arbitrary $r\times r_j$ matrix over $\mathbb{F}_2$, otherwise. Let $T=[T_{i,j}]_{1\leq i,j\leq w}$. The matrix $T$ is called \emph{random block lower-triangular} (RBLT).

\begin{lemma}\label{lem:VerticalT} Let $T$ be an RBLT matrix with parameters $w$, $r$ and $\{r_j: 0\leq r_j\leq r\}_{1\leq j\leq w}$. Let $u=\left\lceil{(n-\gamma)}/{r_{\text{min}}}\right\rceil$, and $n= \sum_{1\leq j\leq w}r_j$. For every integer $0\leq \gamma\leq n-1$, $\Pr\{r(T)<n-\gamma\}\leq u \left(1-2^{-r_{\text{max}}}\right) 2^{-\gamma+n-wr+(r-r_{\text{min}})(u-1)}$.\end{lemma}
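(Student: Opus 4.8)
The plan is to obtain the rank lower bound from a ``fresh randomness kills dependence'' mechanism, exactly as in Lemma~\ref{lem:DensityTM}, but organised along the block structure. I would scan the block-columns of $T$ from left ($j=1$) to right ($j=w$), maintaining a set of columns already known to be linearly independent, and at block-column $j$ add as many of its $r_j$ columns as possible. The point is that block-column $j$ is dense in its bottom $w-j+1$ block-rows, i.e.\ the entries of $T_{j,j},T_{j+1,j},\dots,T_{w,j}$ are independent uniform bits over $\mathbb{F}_2$, independent of everything revealed while processing block-columns $1,\dots,j-1$. Conditioned on the span $W$ accumulated so far, a newly examined column of block-column $j$ fails to enlarge the span only if its $(w-j+1)r$ fresh coordinates land in a coset of $W\cap Z_j$, where $Z_j$ is the coordinate subspace supported on block-rows $j,\dots,w$; hence the conditional failure probability is at most $2^{\dim(W\cap Z_j)-(w-j+1)r}$. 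The role of the ``arbitrary'' upper blocks $T_{i,j}$ with $i<j$ is only to shift the relevant coset, not to change its size, which is what keeps the estimate uniform over those (adversarial) entries; and $\dim(W\cap Z_j)$ can itself be controlled by the number of columns already added from block-column $j$ together with the rank deficiency of the top-left $(j-1)\times(j-1)$ block submatrix, which is again an RBLT matrix.

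Equivalently, and more cleanly for turning the estimate into a closed form, I would argue on the right kernel: $\text{rank}(T)<n-\gamma$ iff $\dim\ker T\ge\gamma+1$, and by a first-moment bound applied to the nonnegative integer count of $(\gamma+1)$-dimensional subspaces contained in $\ker T$, $\Pr\{\text{rank}(T)<n-\gamma\}\le\sum_{U}\Pr\{U\subseteq\ker T\}$ over such subspaces $U$. Parametrising $U$ by its reduced row-echelon basis and reading the condition $Tx=0$ block-row by block-row, one sees that in block-row $i$ the equations contributed by basis vectors whose pivot lies in block-columns $1,\dots,i$ impose, jointly, $r$ fresh uniform bits each — the pivot positions being distinct makes the corresponding restrictions of the basis vectors linearly independent — while the remaining block-row-$i$ equations are deterministic conditions on the arbitrary blocks. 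Hence $\Pr\{U\subseteq\ker T\}\le 2^{-r\sum_{m}(w-j^{*}_m+1)}$, where $j^{*}_1\le\dots\le j^{*}_{\gamma+1}$ are the block-columns of the pivots of $U$. Since the number of $U$ with a prescribed pivot set is an explicit power of two, $\sum_U\Pr\{U\subseteq\ker T\}$ collapses to a purely combinatorial sum over pivot patterns.

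The remaining, and principal, step is to bound that sum and reduce it to the stated expression. The configurations that dominate place the $\gamma+1$ pivots in the block-columns with the fewest fresh coordinates — the last ones — and the normalisation then produces the three features of the bound: the exponent $-\gamma+n-wr+(r-r_{\text{min}})(u-1)$, in which $n-wr=-\sum_j(r-r_j)$ accounts for the ``missing'' random columns of the short block-columns and $(r-r_{\text{min}})(u-1)$ for concentrating the deficiency in the thinnest ones; the factor $1-2^{-r_{\text{max}}}$, which emerges from summing the per-position geometric series $\sum 2^{-p}$ over one block-column; and the linear prefactor $u=\lceil (n-\gamma)/r_{\text{min}}\rceil$, the number of block-columns over which the extremal configurations must spread the pivots and hence the range of the union bound. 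I expect this last bookkeeping — keeping the per-step estimates uniform in the adversarial blocks, and then pinning down exactly which pivot configuration maximises the union bound and summing it — to be the technical heart of the argument; the probabilistic input that precedes it is routine given Lemma~\ref{lem:DensityTM}.
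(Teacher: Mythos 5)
You should first note that this paper never proves Lemma~\ref{lem:VerticalT}: it is quoted from~\cite{HB1S:2012} (``the details of the proofs \dots were given in~\cite{HB1S:2012} and hence omitted''), so there is no in-paper argument to compare against, and your proposal must stand on its own. Its probabilistic ingredients are sound: the conditional estimate $2^{\dim(W\cap Z_j)-(w-j+1)r}$ for a freshly scanned column, and the estimate $\Pr\{U\subseteq\ker T\}\le 2^{-r\sum_m (w-j^{*}_m+1)}$ for a fixed $(\gamma+1)$-dimensional subspace, obtained via an echelon basis (distinct pivots make the relevant restrictions linearly independent, and the dense blocks in distinct block-rows contribute independent $r$-bit constraints, uniformly in the adversarial blocks), are both correct. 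But the lemma is a specific closed form, $u\left(1-2^{-r_{\max}}\right)2^{-\gamma+n-wr+(r-r_{\min})(u-1)}$ with $u=\lceil (n-\gamma)/r_{\min}\rceil$, and you explicitly defer the entire reduction of your pivot-pattern sum to that expression, calling it ``the technical heart.'' That deferred step is the content of the lemma, so the proposal has a genuine gap rather than omitted routine bookkeeping.

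Moreover, the roadmap you give for that final step does not match the route you chose, which suggests the gap is not merely unexecuted. In the kernel/pivot-pattern sum there are only $\gamma+1$ pivots, so the extremal configurations occupy roughly $\lceil(\gamma+1)/r_{\min}\rceil$ block-columns, not $u=\lceil(n-\gamma)/r_{\min}\rceil$; and with the pivots pushed into the last block-columns the dominant exponent is of order $-r(\gamma+1)$ plus enumeration corrections, which is not the stated $-\gamma+n-wr+(r-r_{\min})(u-1)$ (in the equal-width case the latter is about $r_{\min}-r-\gamma$). The quantity $u$ is naturally the number of block-columns needed to supply $n-\gamma$ columns, an object that belongs to a forward column-scan: bound the probability that the $(\gamma+1)$\textsuperscript{th} rank failure ever occurs (equivalently, work with the first $n-\gamma$ columns), sum the per-column terms $2^{(c-1-\gamma)-(w-j(c)+1)r}$, collapse each block-column's contribution as a geometric series --- which is where the factor $1-2^{-r_{\max}}$ arises --- and take the worst case $r_j=r_{\min}$, which yields the correction $(r-r_{\min})(u-1)$ and the prefactor $u$. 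You gesture at such a scan in your first paragraph, but there you propose instead to control $\dim(W\cap Z_j)$ recursively through the top-left RBLT submatrix (neither needed nor carried out), and you never combine the scan with a count of rank failures to reach the stated exponent. Until one of your two routes is pushed all the way to the displayed bound, uniformly in the adversarial blocks and in $\{r_j\}$, the lemma remains unproved.
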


\begin{lemma}\label{lem:HorizontalT} Let $T$ be an RBLT matrix with parameters $w$, $r$ and $\{r_j: 0\leq r\leq r_j\}_{1\leq j\leq w}$. Let $u=\left\lceil{(n-\gamma)}/{r}\right\rceil$, and $n = w r$. For every integer $0\leq \gamma\leq n-1$, $\Pr\{r(T)<n-\gamma\}\leq u \left(1-2^{-r}\right) 2^{-\gamma+n-wr_{\text{min}}+(r_{\text{min}}-r)(u-1)}$.\end{lemma}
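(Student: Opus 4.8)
The plan is to mirror the structure used for Lemma~\ref{lem:VerticalT}, but now exploiting that each block has at least $r$ rows rather than at most $r$; the roles of ``rows'' and ``columns'' within a block essentially swap, which is why the bound involves $r_{\text{min}}$ in place of $r_{\text{max}}$ and the exponent is reorganized accordingly. First I would set up the event $\{r(T)<n-\gamma\}$ in terms of the column span: since $T$ has $n=wr$ columns, $r(T)<n-\gamma$ means there is a nonzero vector in the left kernel supported on a set of rows whose removal still leaves the rank deficient, or more conveniently, I would expose the columns block by block. Process the block-columns $j=1,\dots,w$ from left to right, and for each $j$ reveal the dense sub-blocks $T_{i,j}$ with $i\geq j$ (the lower-triangular part), which are $r\times r_j$ with $r_j\geq r$ and have i.i.d.\ uniform $\mathbb{F}_2$ entries, while treating the strictly-upper blocks $T_{i,j}$ with $i<j$ adversarially (they only help). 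The key observation is that when we adjoin block-column $j$, the fresh randomness lives in the rows indexed by $i\in\{j,\dots,w\}$, and the dense $(w-j+1)r \times r_j$ matrix formed by stacking $T_{j,j},\dots,T_{w,j}$ contributes, with high probability, a rank increment of at least $r$ (the full row-dimension available in block-row $j$), because a uniform random matrix over $\mathbb{F}_2$ with at least as many columns as rows has full row rank except with probability at most $1-2^{-r}$ per ``missing'' unit of rank.

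Next I would make the counting precise. Let $\gamma$ be the allowed rank deficit and $u=\lceil (n-\gamma)/r\rceil$ as in the statement; the idea is that to have total rank below $n-\gamma$, at least one of the $w$ block-column stages must fail to deliver its expected increment of $r$, and in fact the shortfalls must accumulate across at most $u$ critical stages. For each such stage the conditional failure probability is bounded, using the standard estimate for the rank of a dense matrix over $\mathbb{F}_2$ (a matrix with $a$ rows and $b\geq a$ columns, i.i.d.\ uniform entries, fails to have rank $a$ with probability at most $\sum_{t\geq 1}2^{-(b-a+t)}\le 2^{-(b-a)}$, and more crudely by $(1-2^{-a})$-type bounds), by $(1-2^{-r})$ times an exponential factor $2^{-\gamma+n-wr_{\text{min}}+(r_{\text{min}}-r)(u-1)}$ that collects the geometric decay from the over-determined blocks. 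A union bound over the at most $u$ possible ``first failure'' (or deficit-contributing) stages yields the factor $u$. Concretely, I would define partial ranks $\rho_j$ after incorporating the first $j$ block-columns, show $\rho_j-\rho_{j-1}\ge \min(r, \text{remaining})$ except on a controlled event, and telescope: $n-\rho_w = \sum_j (r-(\rho_j-\rho_{j-1}))^+$ so a deficit of more than $\gamma$ forces a large total shortfall concentrated in few stages.

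The main obstacle I anticipate is getting the exponent $-\gamma+n-wr_{\text{min}}+(r_{\text{min}}-r)(u-1)$ exactly right rather than up to constants, since $r_{\text{min}}-r\le 0$ here (each block has \emph{more} columns than $r$), so the term $(r_{\text{min}}-r)(u-1)$ is a \emph{penalty}, not a gain, and one must be careful that the worst case is when all block-columns have exactly $r_{\text{min}}$ columns — the bound should be monotone in each $r_j$ in the right direction. I would handle this by first proving the clean case where all $r_j=r_{\text{min}}$ (so $T$ is genuinely ``tall'' block-lower-triangular with uniform blocks), obtaining the factor $2^{-\gamma+n-wr_{\text{min}}}$ with $n=wr$ replaced appropriately, then argue that enlarging any $r_j$ beyond $r_{\text{min}}$ can only add extra (possibly adversarial) columns which never decrease the rank, and tracking how the index $u$ and the stage-count interact gives the corrective term $(r_{\text{min}}-r)(u-1)$. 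A secondary subtlety is the dependence structure: revealing blocks left-to-right keeps the needed sub-blocks independent of the conditioning, but I must ensure the adversarial upper-triangular blocks are fixed \emph{before} the relevant randomness is exposed at each stage, which is automatic under the left-to-right ordering. Everything else is a routine union bound and a geometric-series estimate of the form $\sum_{t\ge 1}(1-2^{-r})2^{-(\cdot)+t}$, so I would not belabor it.
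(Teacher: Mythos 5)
Your plan has genuine gaps, and they sit exactly at the points you flag as ``to be worked out.'' First, a sign error that derails your reduction: in this lemma every $r_j\geq r$, so $r_{\text{min}}\geq r$ and $r_{\text{min}}-r\geq 0$, not $\leq 0$ as you claim; together with $n-wr_{\text{min}}=-w(r_{\text{min}}-r)$ the exponent equals $-\gamma-(r_{\text{min}}-r)(w-u+1)\leq-\gamma$, i.e., the wide blocks give a net gain, so your reading of $(r_{\text{min}}-r)(u-1)$ as a ``penalty'' and your plan for how it should emerge from the homogeneous case are based on the wrong sign. Second, the engine of your argument --- that adjoining block-column $j$ yields a conditional rank increment of at least $r$ up to a geometrically small failure probability --- is asserted, not proved. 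The increment is the dimension of the image of the new columns in the quotient modulo the span of the previously exposed columns; the fresh randomness lives only in block-rows $j,\dots,w$, but the earlier block-columns are also dense on those rows, so the conditional law of the increment depends on how the already-spanned column space intersects the coordinate subspace supported on block-rows $\geq j$, which you never control. Similarly, ``the adversarial blocks only help'' is not automatic in this per-stage sense: the arbitrary entries sit inside the same columns as the random entries (they shift the random image by fixed vectors) rather than forming extra columns, so monotonicity under adding columns does not cover them. Finally, you concede you do not see how the prefactor $u$ and the exact exponent arise; that is precisely the content of the lemma, so the proposal as it stands does not establish the statement.

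For comparison (the paper itself defers the proof to~\cite{HB1S:2012}), the stated constants come out of a short global null-vector count, with no sequential conditioning at all. If $\text{rank}(T)<n-\gamma$, the left null space has dimension at least $\gamma+1$, hence contains a nonzero $x$ vanishing on the first $\gamma$ coordinates; writing $x=(x_1,\dots,x_w)$ along block-rows and letting $i$ be the largest index with $x_i\neq 0$, one gets $i\geq\lfloor\gamma/r\rfloor+1=w-u+1$. For such a fixed $x$ and each $j\leq i$, the term $x_i^{T}T_{i,j}$ is uniform on $\mathbb{F}_2^{r_j}$ and independent across $j$, so $\Pr\{x^{T}T=0\}\leq 2^{-\sum_{j\leq i}r_j}\leq 2^{-ir_{\text{min}}}$ irrespective of the arbitrary upper blocks --- this is how the adversarial part is dispatched. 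The number of vectors vanishing on the first $\gamma$ coordinates whose top nonzero block is exactly $i$ is at most $2^{ir-\gamma}\left(1-2^{-r}\right)$, so a union bound gives $\sum_{i=w-u+1}^{w}\left(1-2^{-r}\right)2^{ir-\gamma-ir_{\text{min}}}$; since $r_{\text{min}}\geq r$ the summand is nonincreasing in $i$, there are $u$ terms, and the largest is $\left(1-2^{-r}\right)2^{-\gamma-(w-u+1)(r_{\text{min}}-r)}$, which is exactly $u\left(1-2^{-r}\right)2^{-\gamma+n-wr_{\text{min}}+(r_{\text{min}}-r)(u-1)}$ after expanding $n=wr$. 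Your column-exposure scheme might be repairable, but it would require the missing conditional-increment analysis and would still have to reproduce these constants; the counting argument gets them directly.
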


The application of the lemmas is subject to useful and tight choices of $w$, $r$, and $r_j$'s. Such parameters depend on the traffic over the $i\textsuperscript{th}$ and $(i+1)\textsuperscript{th}$ links, and hence not straightforward to optimize. However, by using a probabilistic technique, tight bounds on such parameters can be derived.

Let $(0,N_T]$ be the period of time over which the transmissions occur. Let $(0,N_T]$ be divided into $w$ disjoint partitions of length $N_T/w$. For every $1\leq j\leq w$, and $1\leq i\leq L$, let $I_{ij}$ be the $j\textsuperscript{th}$ partition pertaining to the $i\textsuperscript{th}$ link. For every $i,j$: $i\leq j\leq w-L+i$, $I_{ij}$ is called \emph{active}. Let $w_T\doteq L(w-L+1)$ be the total number of active partitions.

Let $\varphi_{ij}$ be the number of successful packets in $I_{ij}$. By the assumption, $\varphi_{ij}$ is a binomial random variable with the expected value $\varphi_i=p_i N_T/w$. Let $p\doteq\min_{1\leq i\leq L}p_i$, and $\varphi\doteq pN_T/w$. For any real number $x$, let $\dot{x}$ denote $x/2$. By applying the Chernoff bound, one can show that $\varphi_{ij}$ is not larger than or equal to $r\doteq (1-\gamma^{*})\varphi$ with probability (w.p.) bounded above by (b.a.b.) $\dot{\epsilon}/w_T$, so long as $0<\gamma^{*}<1$, where $\gamma^{*}\sim \sqrt{({1}/{\dot{\varphi}})\ln({w_T}/{\dot{\epsilon}})}$. For all $i,j$, suppose that $\varphi_{ij}$ is larger than or equal to $r$.



Let $\mathcal{D}(Q_i^j)$ be the number of dense packets in the first $j$ active partitions over the $i\textsuperscript{th}$ link.

The packets over the first link are all dense. Thus, for all $j$, $\mathcal{D}(Q_1^j)\geq rj$. For any other values of $i,j$, by applying Lemma~\ref{lem:VerticalT}, it can be shown that the inequality $\mathcal{D}(Q_i^j)\geq rj-j(1+o(1))\log(w_T/\epsilon)$ fails w.p. b.a.b. $ij\dot{\epsilon}/w_T$, so long as \begin{equation}\label{eq:Temp7}w\log\frac{w_T}{\epsilon}=o(p N_T).\end{equation}

This result shows that the number of dense packets at the sink node, $\mathcal{D}(Q_L)$, fails to be larger than \begin{eqnarray}\label{eq:Temp8}
 \lefteqn{p N_T - O(p N_T L/w) -  } \nonumber\\
   && O(\sqrt{p N_T w\log(w L/\epsilon)})-O(w\log(w L/\epsilon)),
\end{eqnarray} w.p. b.a.b. $\epsilon$. By condition~\eqref{eq:Temp7}, it follows that each $O(.)$ term in~\eqref{eq:Temp8} is $o(p N_T)$ which ensures that the code achieves the capacity. We specify $w$ by $\sqrt[3]{{p N_T L^2}/{\log(p N_T L/\epsilon)}}$ in order to maximize~\eqref{eq:Temp8} subject to condition~\eqref{eq:Temp7}.

Let $n_T$ be equal to~\eqref{eq:Temp8}. Thus, $Q_L$ fails to include an $n_T\times k$ dense sub-matrix w.p. b.a.b. $\epsilon$.

\begin{lemma}\label{lem:DenseRankProb}Let $Q$ be an $n\times k$ ($k\leq n$) dense matrix over $\mathbb{F}_2$. Then, $\Pr\{\text{rank}(Q)<k\}\leq 2^{-(n-k)}$.\end{lemma}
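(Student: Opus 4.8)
The plan is to bound the probability that a random $n \times k$ matrix over $\mathbb{F}_2$ with i.i.d.\ uniform Bernoulli entries fails to have full column rank $k$. I would proceed by a union bound over the coordinate vectors in the kernel: $\text{rank}(Q) < k$ if and only if the $k$ columns of $Q$ are linearly dependent, which happens precisely when there exists a nonzero vector $x \in \mathbb{F}_2^k$ with $Qx = 0$. Hence
\begin{equation}
\Pr\{\text{rank}(Q) < k\} = \Pr\Bigl\{\exists\, x \in \mathbb{F}_2^k \setminus \{0\}: Qx = 0\Bigr\} \leq \sum_{x \in \mathbb{F}_2^k \setminus \{0\}} \Pr\{Qx = 0\}.
\end{equation}

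The key step is to compute $\Pr\{Qx = 0\}$ for a fixed nonzero $x$. Since $x \neq 0$, it has some nonzero coordinate, say the $\ell$-th. For each row $Q_j$ of $Q$, the inner product $\langle Q_j, x\rangle = \sum_{m} Q_{jm} x_m$ is a sum over $\mathbb{F}_2$ that includes the term $Q_{j\ell}$; conditioning on all other entries of row $j$, the value $Q_{j\ell}$ is uniform on $\{0,1\}$ independent of them, so $\langle Q_j, x\rangle$ is uniform on $\mathbb{F}_2$. The rows being independent, $\Pr\{Qx = 0\} = \prod_{j=1}^n \Pr\{\langle Q_j, x\rangle = 0\} = 2^{-n}$. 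Combining with the union bound over the $2^k - 1$ nonzero vectors gives $\Pr\{\text{rank}(Q) < k\} \leq (2^k - 1)\,2^{-n} < 2^{k-n} = 2^{-(n-k)}$, which is the claimed bound.

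I do not anticipate a serious obstacle here; this is a classical fact. The one point requiring a little care is making precise that the entries are independent and uniform (so that the conditioning argument showing $\langle Q_j, x\rangle$ is uniform is valid), which is exactly the definition of a dense matrix given earlier in the excerpt. A minor sharpening — replacing $2^k - 1$ by $2^k$ to get the clean bound $2^{-(n-k)}$ — costs nothing and avoids carrying the $-1$ through subsequent estimates.
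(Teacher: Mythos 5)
Your proof is correct: the union bound over the $2^k-1$ nonzero vectors $x\in\mathbb{F}_2^k$, together with the observation that $\Pr\{Qx=0\}=2^{-n}$ because each row of a dense matrix has i.i.d.\ uniform entries, gives $(2^k-1)2^{-n}<2^{-(n-k)}$ as claimed. The paper itself states this lemma without proof (it is a standard fact carried over from the authors' earlier dense-code analysis), and your kernel/union-bound argument is exactly the canonical way to establish it, so there is nothing to reconcile between the two.
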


By Lemma~\ref{lem:DenseRankProb}, $\Pr\{\text{rank}(Q_L)<k\}$ is b.a.b. $\epsilon$, so long as $k\leq n_T-\log(1/\epsilon)$. By replacing $\epsilon$ with $\dot{\epsilon}$, it follows that the sink node can recover all the message vectors w.p. b.a.b. $\epsilon$, so long as $k\leq n_T - \log(1/\epsilon)-1$. Let $k_{\text{max}}$ be the largest integer $k$ satisfying this inequality. Thus, $k_{\text{max}}\sim p N_T$, and by replacing $N_T$ with $k/p$, the following result is immediate.

\begin{theorem}\label{thm:DenseCodesRegularBernoulliActualNon-IdenticalGeneral} The coding delay of a dense code over a line network of $L$ links with deterministic regular traffics and Bernoulli losses with parameters $\{p_i\}$ is larger than \[\frac{1}{p}\left(k+(1+o(1))\left(\frac{kL}{w}+\sqrt{k\left(w\log\frac{wL}{\epsilon}\right)}+w\log\frac{wL}{\epsilon}\right)\right)\] w.p.~b.a.b.~$\epsilon$,~where~$w\sim \left(k L^2/\log(k L/\epsilon)\right)^{\frac{1}{3}}$,~$p\doteq\min_{1\leq i\leq L}{p_i}$.\end{theorem}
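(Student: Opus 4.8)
The plan is to fix an arbitrary transmission horizon $N_T$, show that the sink $v_L$ recovers all $k$ message vectors by time $N_T$ with probability at least $1-\epsilon$ as soon as $k$ stays below a threshold $k_{\text{max}}\sim pN_T$, and then invert the relation $N_T\sim k/p$ to read off the bound on the coding delay. Following the decomposition announced before the lemmas, the core is a two-stage estimate: (a) lower bound the number of dense packets accumulated at $v_L$ by time $N_T$, and (b) show that enough of those packets are innovative, the latter being an immediate application of Lemma~\ref{lem:DenseRankProb}.

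For stage (a) I would partition $(0,N_T]$ into $w$ equal-length sub-intervals and, for each link $i$, call the partitions $I_{ij}$ with $i\le j\le w-L+i$ \emph{active}, so that there are $w_T=L(w-L+1)$ active partitions. The count $\varphi_{ij}$ of successful packets in $I_{ij}$ is binomial with mean at least $\varphi=pN_T/w$; a Chernoff bound and a union bound over the $w_T$ active partitions show that, outside an event of probability $\dot{\epsilon}$, every $\varphi_{ij}$ is at least $r=(1-\gamma^{*})\varphi$ with $\gamma^{*}\sim\sqrt{(1/\dot{\varphi})\ln(w_T/\dot{\epsilon})}$, which requires $0<\gamma^{*}<1$, i.e.\ condition~\eqref{eq:Temp7}. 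Conditioning on this event, I would propagate density link by link: the first-link packets are all dense, so $\mathcal{D}(Q_1^j)\ge rj$; for $i\ge2$, the transfer matrix $T'_i$ contains an RBLT sub-matrix with block sizes all $\approx r$, so Lemma~\ref{lem:VerticalT} (together with Lemma~\ref{lem:DensityTM}) gives $\mathcal{D}(Q_i^j)\ge rj-j(1+o(1))\log(w_T/\epsilon)$ outside an event of probability $ij\dot{\epsilon}/w_T$. Summing these failure probabilities over $i\le L$ and $j\le w$ keeps the total below $\dot{\epsilon}$, and at $i=L$, $j=w$ this yields the lower bound~\eqref{eq:Temp8} on $\mathcal{D}(Q_L)$: namely $pN_T$ minus an $O(pN_T L/w)$ term (from the $L-1$ inactive partitions per link), an $O(\sqrt{pN_T w\log(wL/\epsilon)})$ term (the Chernoff slack $\gamma^{*}pN_T$), and an $O(w\log(wL/\epsilon))$ term (the accumulated rank deficiency), all with probability at least $1-\epsilon$. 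Taking $w\sim(pN_T L^2/\log(pN_T L/\epsilon))^{1/3}$ balances the first two error terms and, under~\eqref{eq:Temp7}, dominates the third, so every $O(\cdot)$ term is $o(pN_T)$ and the code is capacity-achieving.

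Let $n_T$ denote the resulting lower bound on $\mathcal{D}(Q_L)$, so that $Q_L$ contains an $n_T\times k$ dense sub-matrix with probability at least $1-\epsilon$. Stage (b) is then Lemma~\ref{lem:DenseRankProb}: that sub-matrix has rank less than $k$ with probability at most $2^{-(n_T-k)}$, which is at most $\epsilon$ once $k\le n_T-\log(1/\epsilon)$. Running both stages at error level $\dot{\epsilon}$ keeps the overall failure probability at most $\epsilon$, so decoding succeeds by time $N_T$ whenever $k\le k_{\text{max}}:=n_T-\log(1/\epsilon)-1$, and $k_{\text{max}}\sim pN_T$. Finally I would invert: substituting $N_T=k/p$ into~\eqref{eq:Temp8} and into the choice of $w$, and passing to the contrapositive, gives that the coding delay exceeds $\frac1p\bigl(k+(1+o(1))(kL/w+\sqrt{k\,w\log(wL/\epsilon)}+w\log(wL/\epsilon))\bigr)$ with probability b.a.b.\ $\epsilon$, where $w\sim(kL^2/\log(kL/\epsilon))^{1/3}$, which is the claimed bound.

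The routine ingredients — the Chernoff and union bounds and the final rank estimate — are standard; the real obstacle is the link-by-link density recursion. Two points need care: verifying that $T'_i$ genuinely contains an RBLT sub-matrix with the claimed parameters $w$, $r$, $\{r_j\}$ (so that Lemma~\ref{lem:VerticalT} applies, even though the column-block sizes are only controlled through the density bound at node $i$), and checking that the per-link, per-partition loss $j(1+o(1))\log(w_T/\epsilon)$ telescopes at the sink into only an $O(w\log(wL/\epsilon))$ term rather than one that also grows with $L$. This bookkeeping, together with the optimization of $w$ that keeps each error term $o(pN_T)$, is where the work concentrates; it is carried out in detail in~\cite{HB1S:2012} and only sketched here.
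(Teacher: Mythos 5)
Your proposal is correct and follows essentially the same route as the paper's own argument: the same partition of $(0,N_T]$ into $w$ sub-intervals with $w_T=L(w-L+1)$ active partitions, the same Chernoff estimate giving $r=(1-\gamma^{*})\varphi$ under condition~\eqref{eq:Temp7}, the same link-by-link density propagation via Lemmas~\ref{lem:DensityTM} and~\ref{lem:VerticalT}, the same rank step via Lemma~\ref{lem:DenseRankProb} with the halved error levels, the same choice of $w$ maximizing~\eqref{eq:Temp8}, and the same inversion $N_T\sim k/p$, with the fine bookkeeping deferred to~\cite{HB1S:2012} exactly as the paper does. Two cosmetic quibbles that do not affect the argument: at the sink the relevant index is $j=w-L+1$ rather than $j=w$, and the failure probabilities $ij\dot{\epsilon}/w_T$ are cumulative bounds to be read off at $(i,j)=(L,w-L+1)$ (yielding $\dot{\epsilon}$ there) rather than summed over all pairs $(i,j)$, since only the sink's inequality is needed.
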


In the case of the average coding delay, the analysis proceeds by replacing $r$ with $\varphi$ in the preceding results, and re-specifying $w$ by $\sqrt{{p N_T L}/{\log(p N_T L/\epsilon)}}$ in order to maximize \begin{equation}\label{eq:Temp5} pN_T - O(p N_T L/w) - O(w\log(w L/\epsilon)),\end{equation} instead of~\eqref{eq:Temp8}, subject to condition~\eqref{eq:Temp7}.

\begin{theorem}\label{thm:DenseCodesRegularBernoulliAverageNon-IdenticalGeneral} The average coding delay of a dense code over a network similar to Theorem~\ref{thm:DenseCodesRegularBernoulliActualNon-IdenticalGeneral} is larger than \[\frac{1}{p}\left(k +(1+o(1))\left(\frac{k L}{w}+w\log\frac{wL}{\epsilon}\right)\right)\] w.p. b.a.b. $\epsilon$, where $w\sim \left({kL/\log(kL/\epsilon)}\right)^{\frac{1}{2}}$.\end{theorem}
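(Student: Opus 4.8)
\emph{Proof sketch.} The plan is to re-run the argument underlying Theorem~\ref{thm:DenseCodesRegularBernoulliActualNon-IdenticalGeneral} while exploiting the fact that the traffic is now averaged out. In that argument, the traffic randomness enters the bound on the number $\mathcal{D}(Q_L)$ of dense packets at the sink only through the Chernoff step: there, a union bound over all $w_T\doteq L(w-L+1)$ active partitions is used to replace the per-partition success count $\varphi_{ij}$ by the pessimistic value $r\doteq(1-\gamma^{*})\varphi$ with $\gamma^{*}\sim\sqrt{(1/\dot{\varphi})\ln(w_T/\dot{\epsilon})}$, and it is exactly this slack that creates the $O(\sqrt{pN_T\,w\log(wL/\epsilon)})$ term in~\eqref{eq:Temp8}. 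For the average coding delay this union bound is unnecessary: one may work directly with the expected partition size $\varphi=pN_T/w$, i.e.\ replace $r$ by $\varphi$ throughout Lemmas~\ref{lem:VerticalT} and~\ref{lem:DenseRankProb} and in the density recursion. First I would make this substitution precise and show that, with $\varphi$ in the role of $r$, the recursion $\mathcal{D}(Q_i^j)\geq \varphi j-j(1+o(1))\log(w_T/\epsilon)$ survives in expectation over the traffic, so that the inequality $\mathbb{E}_{\text{traffic}}[\mathcal{D}(Q_L)]\geq pN_T-O(pN_T L/w)-O(w\log(wL/\epsilon))$, i.e.\ that $\mathbb{E}_{\text{traffic}}[\mathcal{D}(Q_L)]$ is at least the quantity in~\eqref{eq:Temp5}, fails with probability (over the code) b.a.b.\ $\epsilon$. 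As in the proof of Theorem~\ref{thm:DenseCodesRegularBernoulliActualNon-IdenticalGeneral}, condition~\eqref{eq:Temp7} keeps each $O(\cdot)$ term here $o(pN_T)$, so the code still achieves the capacity.

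Second, I would re-optimize $w$. With the Chernoff term absent, the correction in~\eqref{eq:Temp5} is smallest when $pN_T L/w$ and $w\log(wL/\epsilon)$ are balanced, which gives $w\sim\sqrt{pN_T L/\log(pN_T L/\epsilon)}$; this choice automatically satisfies~\eqref{eq:Temp7} and makes the correction $O(\sqrt{pN_T L\log(pN_T L/\epsilon)})$. Writing $n_T$ for the resulting value of~\eqref{eq:Temp5}, the matrix $Q_L$ then contains an $n_T\times k$ dense sub-matrix (in expectation over the traffic) outside an event of probability b.a.b.\ $\epsilon$, and Lemma~\ref{lem:DenseRankProb} gives $\Pr\{\text{rank}(Q_L)<k\}$ b.a.b.\ $2^{-(n_T-k)}$, which is b.a.b.\ $\dot{\epsilon}$ provided $k\leq n_T-\log(1/\dot{\epsilon})$. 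Hence, after replacing $\epsilon$ by $\dot{\epsilon}$ and absorbing the resulting additive constants into the $o(\cdot)$ and $O(\cdot)$ terms, the coding delay averaged over the traffic does not exceed $N_T$ with probability (over the code) $\geq 1-\epsilon$ as long as $k\leq k_{\text{max}}$ for some $k_{\text{max}}\sim pN_T$.

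Finally, solving $k=k_{\text{max}}\sim pN_T$ for $N_T$, so that $N_T\sim k/p$, and substituting this into the correction terms and into the expression for $w$ yields $w\sim\sqrt{kL/\log(kL/\epsilon)}$ and the claimed bound $\frac{1}{p}\left(k+(1+o(1))\left(\frac{kL}{w}+w\log\frac{wL}{\epsilon}\right)\right)$.

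I expect the main obstacle to be the first step: rigorously justifying the replacement of the high-probability per-partition bound $r$ by its mean $\varphi$ once the delay is averaged over the traffic. The point is that the binomial fluctuations of the $\varphi_{ij}$ about $\varphi_i=p_iN_T/w$ need not be controlled \emph{simultaneously} over all $w_T$ active partitions --- the very requirement that forced the $\gamma^{*}$ slack --- but must instead be handled in expectation; this calls for combining the monotonicity of the bound of Lemma~\ref{lem:VerticalT} in the block sizes $\{r_j\}$ with a Jensen-type interchange of the expectation over the traffic and the telescoping density recursion over the $L$ links and $w$ partitions. Once this interchange is in place, the remainder is the proof of Theorem~\ref{thm:DenseCodesRegularBernoulliActualNon-IdenticalGeneral} verbatim, but with the new value of $w$.
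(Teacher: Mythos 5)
Your proposal follows essentially the same route as the paper: the paper's own argument for this theorem is precisely to replace $r$ by $\varphi$ (thereby dropping the Chernoff-slack term $O(\sqrt{pN_T\,w\log(wL/\epsilon)})$ from~\eqref{eq:Temp8}, leaving~\eqref{eq:Temp5}), re-balance $w\sim\sqrt{pN_TL/\log(pN_TL/\epsilon)}$ subject to~\eqref{eq:Temp7}, and then substitute $N_T\sim k/p$ as in Theorem~\ref{thm:DenseCodesRegularBernoulliActualNon-IdenticalGeneral}. The ``main obstacle'' you flag (justifying the use of the mean $\varphi$ once the delay is averaged over traffics) is treated no more explicitly in the paper, which simply asserts the substitution, so your sketch is at least as complete as the paper's.
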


In order to derive tighter bounds the actual values of the success parameters $\{p_i\}$ need to be taken into consideration. In particular, the coding delay and the average coding delay of dense codes for a special case, where no two links have equal success parameters, are upper bounded as follows.

Let us assume $p_1>p_2>\cdots> p_L$, without loss of generality. Let $p\doteq \min_{1\leq i\leq L} p_i$, $\gamma_e\doteq \min_{1<i\leq L}\gamma_{e_i}$, and $\gamma_{e_i}\doteq |p_i-p_{i-1}|$. Let $r_i\doteq (1-\gamma^{*}_i)\varphi_i$, where $\varphi_i=p_iN_T/w$ and $\gamma^{*}_i\sim\sqrt{(1/\dot{\varphi_i})\log(w_T/\dot{\epsilon})}$. Let $\varphi_{ij}$ be defined as before. For all $i,j$, suppose that $\varphi_{ij}$ is larger than or equal to $r_i$.

Similarly as before, for all $j$, $\mathcal{D}(Q_1^j)\geq r_1 j$. For any other values of $i,j$, by applying Lemma~\ref{lem:HorizontalT}, it can be shown that the inequality $\mathcal{D}(Q_i^j)\geq r_i j$ fails w.p. b.a.b. $ij\dot{\epsilon}/w_T$, so long as \begin{equation}\label{eq:UnequalParameters} {w}\log\frac{w_T}{\epsilon}=o(\gamma_e p N_T).\end{equation}

Let $p$, $\varphi$, $\gamma^{*}$ and $r$ denote $p_L$, $\varphi_L$, $\gamma^{*}_L$ and $r_L$, respectively. Thus, the inequality $\mathcal{D}(Q_L)\geq (1-\gamma^{*})\varphi w_T/L$ fails w.p. b.a.b. ${\epsilon}$. By replacing $\varphi$ with $pN_T/w$, the right-hand side of the last inequality can be written as: \begin{equation}\label{eq:Temp4} p N_T - O(p N_T L/w) -O(\sqrt{p N_T w \log(w L/\epsilon)}).\end{equation} The rest of the analysis is similar to that of Theorem~\ref{thm:DenseCodesRegularBernoulliActualNon-IdenticalGeneral}, except that~\eqref{eq:Temp4} excludes the last term in~\eqref{eq:Temp8}, and the choice of $w$ needs to satisfy condition~\eqref{eq:UnequalParameters}, instead of condition~\eqref{eq:Temp7}.

\begin{theorem}\label{thm:DenseCodesRegularBernoulliActualNon-Identical} Consider a sequence of unequal parameters $\{p_i\}_{1\leq i\leq L}$. The coding delay of a dense code over a line network of $L$ links with deterministic regular traffics and Bernoulli losses with parameters $\{p_i\}$ is larger than \[\frac{1}{p}\left(k+(1+o(1))\left(\frac{kL}{w}+\sqrt{k\left(w\log\frac{wL}{\epsilon}\right)}\right)\right)\] w.p. b.a.b. $\epsilon$, where $w\sim\gamma_e \left(k L^2/\log(k L/\epsilon)\right)^{\frac{1}{3}}$, $p \doteq \min_{1\leq i\leq L}p_i$, $\gamma_{e}\doteq \min_{1< i\leq L} \gamma_{e_i}$, and $\gamma_{e_i} \doteq |p_i-p_{i-1}|$.\end{theorem}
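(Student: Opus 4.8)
The plan is to run the probabilistic argument sketched in the paragraphs immediately preceding the statement, and then convert the resulting delay bound (in terms of the transmission horizon $N_T$) into a bound in terms of the message size $k$ by the substitution $N_T \leftrightarrow k/p$. First I would fix the time horizon $(0,N_T]$, partition it into $w$ equal sub-intervals, and identify the $w_T \doteq L(w-L+1)$ active partitions exactly as in the dense-code analysis. For each link $i$ I would set the threshold $r_i \doteq (1-\gamma^{*}_i)\varphi_i$ with $\varphi_i = p_i N_T/w$ and $\gamma^{*}_i \sim \sqrt{(1/\dot{\varphi_i})\log(w_T/\dot\epsilon)}$, and apply the Chernoff bound to conclude that each $\varphi_{ij} \geq r_i$ fails with probability at most $\dot\epsilon/w_T$; a union bound over the $w_T$ active partitions absorbs this into total probability $\dot\epsilon$. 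On the complementary event I would then feed the per-link counts into Lemma~\ref{lem:HorizontalT} — this is the crucial choice of which RBLT lemma to invoke, and the reason the unequal-parameter regime is treated separately: because $p_1 > p_2 > \cdots > p_L$, the block widths along the relevant transfer matrix are non-increasing ($r \leq r_j$ in the notation of Lemma~\ref{lem:HorizontalT}), which is the hypothesis of the horizontal lemma rather than the vertical one. Iterating over $i$ and $j$ and union-bounding the $O(w_T)$ failure events of probability $ij\dot\epsilon/w_T$ each establishes $\mathcal{D}(Q_i^j) \geq r_i j$ with the stated total probability, provided condition~\eqref{eq:UnequalParameters} holds.

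Next I would specialize to the sink, writing $p, \varphi, \gamma^{*}, r$ for the quantities attached to link $L$, so that $\mathcal{D}(Q_L) \geq r\, w_T/L = (1-\gamma^{*})\varphi\, w_T/L$ fails with probability at most $\epsilon$; expanding $w_T/L = w - L + 1$ and substituting $\varphi = pN_T/w$ and $\gamma^{*} \sim \sqrt{(1/\dot\varphi)\log(w_T/\dot\epsilon)}$ yields the lower bound~\eqref{eq:Temp4} on the density at the sink, namely $pN_T - O(pN_T L/w) - O(\sqrt{pN_T w\log(wL/\epsilon)})$. Observe that, compared with the arbitrary-parameter bound~\eqref{eq:Temp8}, the additive term $O(w\log(wL/\epsilon))$ is absent — this is the structural payoff of using Lemma~\ref{lem:HorizontalT}, whose exponent does not carry the extra $n - wr$ slack that Lemma~\ref{lem:VerticalT} does. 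Then, exactly as in the dense-code argument, Lemma~\ref{lem:DenseRankProb} applied to the $n_T \times k$ dense sub-matrix of $Q_L$ (with $n_T$ equal to~\eqref{eq:Temp4}) gives $\Pr\{\mathrm{rank}(Q_L) < k\} \leq 2^{-(n_T - k)}$, so full decoding succeeds with probability at least $1-\epsilon$ once $k \leq n_T - \log(1/\epsilon) - 1$ (after the usual replacement of $\epsilon$ by $\dot\epsilon$). Setting $k_{\max}$ to the largest such $k$ gives $k_{\max} \sim pN_T$.

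Finally I would optimize $w$: since condition~\eqref{eq:UnequalParameters} reads $w\log(w_T/\epsilon) = o(\gamma_e pN_T)$, the choice $w \sim \gamma_e\left(pN_T L^2/\log(pN_T L/\epsilon)\right)^{1/3}$ both respects this constraint and balances the two error terms $O(pN_T L/w)$ and $O(\sqrt{pN_T w\log(wL/\epsilon)})$ in~\eqref{eq:Temp4}. Solving $k \leq n_T - \log(1/\epsilon) - 1$ for $N_T$, i.e. inverting $n_T$ as a function of $N_T$ and replacing $N_T$ by $k/p$ throughout (so that $pN_T \leftrightarrow k$ and $w \sim \gamma_e(kL^2/\log(kL/\epsilon))^{1/3}$), turns the statement ``$k \le k_{\max}$'' into a lower bound on the coding delay $N_T$, which is precisely the displayed expression
\[
\frac{1}{p}\left(k+(1+o(1))\left(\frac{kL}{w}+\sqrt{k\left(w\log\frac{wL}{\epsilon}\right)}\right)\right).
\]
The main obstacle I anticipate is not any single estimate but the bookkeeping that justifies applying Lemma~\ref{lem:HorizontalT} with a legitimate RBLT structure: one must check that the transfer matrix $T'_i$ restricted to active partitions genuinely contains a block lower-triangular dense sub-matrix whose block-width parameters satisfy $r \le r_j$ for all $j$ under the ordering $p_1 > \cdots > p_L$, and that the $\gamma^{*}_i$'s chosen per link are simultaneously $o(1)$ — this is exactly where the hypothesis $\gamma_e \doteq \min_{1<i\le L}|p_i - p_{i-1}| > 0$ enters, ensuring the per-link thresholds $r_i$ stay separated enough for the triangular cancellation in the rank lemma to go through. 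The remaining steps are routine given the machinery already established for dense codes.
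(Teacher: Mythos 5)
Your proposal follows essentially the same route as the paper's sketch: per-link thresholds $r_i=(1-\gamma^{*}_i)\varphi_i$ via Chernoff plus a union bound, Lemma~\ref{lem:HorizontalT} (rather than Lemma~\ref{lem:VerticalT}) to get $\mathcal{D}(Q_i^j)\geq r_i j$ under condition~\eqref{eq:UnequalParameters}, the sink bound~\eqref{eq:Temp4} without the $O(w\log(wL/\epsilon))$ term, Lemma~\ref{lem:DenseRankProb}, and then inverting $k\leq n_T-\log(1/\epsilon)-1$ with $w\sim\gamma_e\left(pN_TL^2/\log(pN_TL/\epsilon)\right)^{1/3}$ and $pN_T\leftrightarrow k$. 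This matches the paper's argument, including the role of the ordering $p_1>\cdots>p_L$ and of $\gamma_e$ in justifying the horizontal RBLT lemma and condition~\eqref{eq:UnequalParameters}.
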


In the case of the average coding delay, the analysis follows the same line as that of Theorem~\ref{thm:DenseCodesRegularBernoulliAverageNon-IdenticalGeneral}, except that the choice of $w$ needs to maximize \begin{equation}\label{eq:Temp9} pN_T - O(p N_T L/w)\end{equation} subject to condition~\eqref{eq:UnequalParameters}, instead of~\eqref{eq:Temp5} subject to condition~\eqref{eq:Temp7}.

\begin{theorem}\label{thm:DenseCodesRegularBernoulliAverageNon-Identical} The average coding delay of a dense code over a network similar to Theorem~\ref{thm:DenseCodesRegularBernoulliActualNon-Identical} is larger than \[\frac{1}{p}\left(k +(1+o(1))\left(\frac{k L}{w}\right)\right)\] w.p. b.a.b. $\epsilon$, where $w\sim \gamma_e k/(f(k)\log(k L/\epsilon))$, and $f(k)$ goes to infinity, as $k$ goes to infinity, such that $f(k)=o(\gamma_e k/\log(kL/\epsilon))$.\end{theorem}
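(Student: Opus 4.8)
The plan is to follow the same skeleton used for Theorem~\ref{thm:DenseCodesRegularBernoulliActualNon-Identical}, but now with the average-coding-delay simplification: in computing the average coding delay we average out the traffic, so the Chernoff concentration of the per-partition success counts is no longer needed and we may replace the reduced count $r_i$ by its expectation $\varphi_i=p_iN_T/w$ throughout. First I would set up the same partitioning of $(0,N_T]$ into $w$ blocks, keep the notion of active partitions and $w_T\doteq L(w-L+1)$, and recall that, because the success parameters are unequal, the transfer matrices have the ``horizontal'' RBLT structure, so Lemma~\ref{lem:HorizontalT} (rather than Lemma~\ref{lem:VerticalT}) governs the rank loss. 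As in the derivation preceding Theorem~\ref{thm:DenseCodesRegularBernoulliActualNon-Identical}, one shows inductively over $i$ and $j$ that $\mathcal{D}(Q_i^j)\geq \varphi_i j$ fails with probability bounded above by $ij\dot\epsilon/w_T$, provided condition~\eqref{eq:UnequalParameters} holds; here there is no super-logarithmic penalty term per partition because the $\gamma_e$ gap between consecutive parameters absorbs the rank deficiency. Summing the failure probabilities over all active partitions gives total failure probability bounded above by $\epsilon$, and at $i=L$, $j=w-L+1$ this yields that $\mathcal{D}(Q_L)\geq (\varphi_L)(w-L+1)=\varphi\, w_T/L$ fails with probability at most $\epsilon$.

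Next I would substitute $\varphi=pN_T/w$ (with $p=p_L=\min_i p_i$) to rewrite the guaranteed number of dense packets at the sink as $pN_T-O(pN_TL/w)$, i.e.\ expression~\eqref{eq:Temp9}; note that because we are in the average-delay / unequal-parameter regime, both the $O(\sqrt{pN_Tw\log(wL/\epsilon)})$ term of~\eqref{eq:Temp4} and the $O(w\log(wL/\epsilon))$ term of~\eqref{eq:Temp5} drop out — the first because we use expectations instead of Chernoff-concentrated counts, the second because the horizontal RBLT structure of the unequal-parameter case removes the additive per-partition logarithmic loss. Then I would apply Lemma~\ref{lem:DenseRankProb} to the resulting $n_T\times k$ dense submatrix of $Q_L$: $\mathrm{rank}(Q_L)<k$ fails with probability at most $\epsilon$ as long as $k\le n_T-\log(1/\epsilon)-1$ (absorbing a further $\dot\epsilon$), so $k_{\max}\sim pN_T - O(pN_TL/w)$, and inverting $N_T=k/p$ gives the stated delay bound $\frac1p\bigl(k+(1+o(1))(kL/w)\bigr)$.

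The remaining point is the choice of $w$: we must maximize~\eqref{eq:Temp9}, i.e.\ minimize the relative loss $L/w$, so we want $w$ as large as possible, but subject to condition~\eqref{eq:UnequalParameters}, $w\log(w_T/\epsilon)=o(\gamma_e pN_T)=o(\gamma_e pk/p)=o(\gamma_e k)$. Since $\log w_T=\Theta(\log(wL/\epsilon))$ and we want the largest admissible $w$, the constraint forces $w=o(\gamma_e k/\log(kL/\epsilon))$; introducing an auxiliary function $f(k)\to\infty$ with $f(k)=o(\gamma_e k/\log(kL/\epsilon))$ and setting $w\sim \gamma_e k/(f(k)\log(kL/\epsilon))$ is the canonical way to express ``the largest $w$ strictly satisfying the little-$o$ constraint,'' which is exactly the statement of the theorem. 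I expect the main obstacle — or at least the only nonroutine step — to be verifying carefully that the horizontal-RBLT application of Lemma~\ref{lem:HorizontalT} in the unequal-parameter setting indeed produces no additive $\log(w_T/\epsilon)$ term per partition in the bound on $\mathcal{D}(Q_i^j)$ (so that~\eqref{eq:Temp9} has no additive logarithmic term and hence $w$ is not further constrained), and that the inductive accumulation of failure probabilities over the $w_T$ active partitions still telescopes to $\epsilon$ under~\eqref{eq:UnequalParameters} with the expectations $\varphi_i$ in place of the concentrated counts $r_i$; the rest is the same bookkeeping as in Theorems~\ref{thm:DenseCodesRegularBernoulliAverageNon-IdenticalGeneral} and~\ref{thm:DenseCodesRegularBernoulliActualNon-Identical}.
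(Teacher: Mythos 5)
Your proposal is correct and follows essentially the same route as the paper, which proves this theorem exactly by combining the two modifications you identify: replace the Chernoff-concentrated counts $r_i$ by their expectations $\varphi_i$ (as in Theorem~\ref{thm:DenseCodesRegularBernoulliAverageNon-IdenticalGeneral}, killing the square-root term) and use the Lemma~\ref{lem:HorizontalT} argument of the unequal-parameter case (killing the additive $w\log(wL/\epsilon)$ term), then maximize~\eqref{eq:Temp9} subject to condition~\eqref{eq:UnequalParameters}, which yields $w\sim\gamma_e k/(f(k)\log(kL/\epsilon))$. Your bookkeeping of failure probabilities and the choice of $w$ via the auxiliary function $f(k)$ match the paper's argument.
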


\vspace{-.35 cm}
\subsection{CC: Capacity-Achieving}\label{subsec:CCCACH}
In a CC, at each transmission time, a chunk is chosen w.p. $1/q$, and a packet transmission over the $i\textsuperscript{th}$ link is successful w.p. $p_i$. Thus the probability that a given packet transmission over the $i\textsuperscript{th}$ link is successful and pertains to a given chunk is $p_i/q$. Thus by replacing $p_i$ with $p_i/q$ in the analysis of dense codes in Section~\ref{subsec:DC}, the coding delay and the average coding delay of CC in a capacity-achieving scenario will be upper bounded as follows.

The results of dense codes are indeed a special case of those of CC with one chunk of size $k$. It is, however, worth noting that, due to the change in the parameters, the number of partitions $w$ needs to satisfy a new condition: $wq\log\frac{w_T q}{\epsilon}=o(p N_T)$ or $wq\log\frac{w_T q}{\epsilon}=o(\gamma_e p N_T)$, instead of condition~\eqref{eq:Temp7} or~\eqref{eq:UnequalParameters}, in the proofs of Theorems~\ref{thm:CapAchCodDelGeneral} and~\ref{thm:CapAchAveCodDelGeneral}, or those of Theorems~\ref{thm:CapAchCodDelSpecial} and~\ref{thm:CapAchAveCodDelSpecial}, respectively. Further by replacing $w$ with its optimal choice in the new version of~\eqref{eq:Temp8},~\eqref{eq:Temp5},~\eqref{eq:Temp4} and~\eqref{eq:Temp9}, each $O(.)$ term needs to be $o(pN_T/q)$ in order to ensure that CC are capacity-achieving in the underlying case. Such a condition lower bounds the size of the chunks $\alpha$ by a function super-logarithmic in the message size $k$.

\begin{theorem}\label{thm:CapAchCodDelGeneral} The coding delay of a CC with $q$ chunks over a line network of $L$ links with deterministic regular traffics and Bernoulli losses with parameters $\{p_i\}$ is larger than \begin{dmath*}\frac{1}{p}\left(k+(1+o(1))\left(\frac{k L}{w}+\sqrt{k\left(wq\log\frac{w q L}{\epsilon}\right)}+wq\log\frac{w q L}{\epsilon}\right)\right)\end{dmath*} w.p. b.a.b. $\epsilon$, so long as $q=o({k}/({L\log(kL/\epsilon)}))$, where $w\sim\left(kL^2/(q\log(kL/\epsilon))\right)^{\frac{1}{3}}$, and $p\doteq \min_{1\leq i\leq L}p_i$.\end{theorem}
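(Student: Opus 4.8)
The plan is to obtain Theorem~\ref{thm:CapAchCodDelGeneral} as a direct corollary of Theorem~\ref{thm:DenseCodesRegularBernoulliActualNon-IdenticalGeneral} by the substitution argument already sketched in the text preceding the statement. Recall that in a CC with $q$ chunks, at each transmission time a given chunk is selected with probability $1/q$, independently of everything else, and a transmission over the $i\textsuperscript{th}$ link succeeds with probability $p_i$; hence, restricted to a fixed chunk, the ``effective'' success parameter of the $i\textsuperscript{th}$ link is $p_i/q$. First I would make this reduction precise: fix an arbitrary chunk, observe that the sub-sequence of transmissions relevant to it forms, over the $i\textsuperscript{th}$ link, a deterministic regular schedule with Bernoulli$(p_i/q)$ successes, and that the coding/decoding dynamics of the dense code used within that chunk (of message size $\alpha=k/q$) are exactly those analyzed in Section~\ref{subsec:DC}. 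Thus the machinery behind Theorem~\ref{thm:DenseCodesRegularBernoulliActualNon-IdenticalGeneral} applies verbatim with $(k,p_i)$ replaced by $(\alpha,p_i/q)$ and with $p\doteq\min_i p_i$ replaced by $p/q$.

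Next I would re-run the partitioning/Chernoff step with the new parameters to see which conditions change. With success parameter $p/q$, the expected number of successes per active partition becomes $\varphi=pN_T/(wq)$, so the Chernoff slack $\gamma^{*}$ and the derived bound~\eqref{eq:Temp8} pick up factors of $q$ in the appropriate places; carrying this through, the condition~\eqref{eq:Temp7} becomes $wq\log\frac{w_T q}{\epsilon}=o(pN_T)$ (the number of chunks also multiplies the failure-probability budget, hence $w_T q$ rather than $w_T$ inside the logarithm), exactly as stated in the paragraph before the theorem. Re-optimizing the counterpart of~\eqref{eq:Temp8} over $w$ subject to this new constraint gives $w\sim\left(kL^2/(q\log(kL/\epsilon))\right)^{1/3}$, and the resulting lower bound on $N_T$ per chunk, after substituting $N_T$ by (number of relevant transmissions) and simplifying, yields the displayed expression with $wq\log\frac{wqL}{\epsilon}$ in place of $w\log\frac{wL}{\epsilon}$. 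I would then note the union bound over all $q$ chunks is absorbed by having replaced $\epsilon$-budgets with $\epsilon/q$-budgets throughout, which is precisely why the logarithms read $wqL/\epsilon$.

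The remaining point is the range of validity: the substitution is only meaningful when each $O(\cdot)$ term in the $q$-scaled version of~\eqref{eq:Temp8} is $o(pN_T/q)$ — i.e., when the CC is capacity-achieving for this chunk size — and this is what forces $q=o\!\left(k/(L\log(kL/\epsilon))\right)$ (equivalently, $\alpha=k/q$ super-logarithmic in $k$ and super-log-linear in $L$). I would verify that under this hypothesis the chosen $w$ indeed satisfies $wq\log\frac{w_Tq}{\epsilon}=o(pN_T)$, so that all the $o(1)$ statements inherited from Lemmas~\ref{lem:VerticalT} and~\ref{lem:DenseRankProb} remain legitimate. The main obstacle I anticipate is purely bookkeeping rather than conceptual: tracking how the extra $q$ factors propagate through the Chernoff bound, the repeated application of Lemma~\ref{lem:VerticalT} across the $w$ partitions and $L$ links, and the final $w$-optimization — in particular making sure the $w_Tq$ inside every logarithm and the $1/q$ in every success probability are placed consistently so that the final bound collapses to the stated form and the constraint on $q$ comes out exactly as claimed. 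Everything else is a transcription of the dense-code argument with relabeled parameters.
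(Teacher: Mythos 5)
Your proposal is correct and follows essentially the same route as the paper's own proof: fix a chunk, replace $p$ and $\epsilon$ by $p/q$ and $\epsilon/q$ in the dense-code analysis of Theorem~\ref{thm:DenseCodesRegularBernoulliActualNon-IdenticalGeneral} (so that $\varphi=pN_T/(wq)$ and condition~\eqref{eq:Temp7} becomes $wq\log\frac{w_Tq}{\epsilon}=o(pN_T)$), re-optimize $w$ as $\left(kL^2/(q\log(kL/\epsilon))\right)^{1/3}$, apply Lemma~\ref{lem:DenseRankProb} with $k/q$ and $\dot{\epsilon}/q$, take a union bound over the $q$ chunks, and extract $q=o\left(k/(L\log(kL/\epsilon))\right)$ from requiring every non-leading term to be $o(k/p)$. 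The only caveat is bookkeeping, which you already flag, and your placement of the $q$ factors matches the paper's.
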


\begin{proof}\renewcommand{\IEEEQED}{} The proof follows the same line as in that of Theorem~\ref{thm:DenseCodesRegularBernoulliActualNon-IdenticalGeneral} by implementing the following modifications. Let us replace $p$ and $\epsilon$ with $p/q$ and $\epsilon/q$, respectively. Then, $\varphi=pN_T/wq$, and $r=(1-\gamma^{*})\varphi$, where $\gamma^{*}\sim\sqrt{(1/\dot{\varphi})\ln(w_T q/\dot{\epsilon})}$. Fix a chunk $\omega$. For all $1\leq i\leq L$, and $1\leq j\leq w-L+1$, let $\mathcal{D}(Q_i^j)$, $\mathcal{D}_p(Q_i^j)$, and $r_{ij}$ be defined as before, but only restricted to the packets pertaining to the chunk $\omega$. Similarly as before, for all $i,j$, $\mathcal{D}(Q_i^j)$ can be lower bounded as follows: for all $1\leq j\leq w-L+1$, $\mathcal{D}(Q_1^j)\geq rj$, and for all other values of $i,j$, $\mathcal{D}(Q_i^j)$ fails to be larger than $rj-j(1+o(1))\log(w_T q/\epsilon)$, w.p. b.a.b. $ij\dot{\epsilon}/w_T q$, so long as \begin{equation}\label{eq:Temp15} wq\log\frac{w_T q}{\epsilon}=o(p N_T).\end{equation} Thus the number of dense packets pertaining to the chunk $\omega$ at the sink node fails to be larger than \begin{eqnarray}\label{eq:Temp16}
 \lefteqn{\frac{p N_T}{q} - O\left(\frac{p N_T L}{wq}\right) -  } \nonumber\\
   && O\left(\sqrt{\frac{p N_T w}{q} \log\frac{w q L}{\epsilon}}\right)-O\left(w\log\frac{w q L}{\epsilon}\right),
\end{eqnarray} w.p. b.a.b. $\epsilon/q$. In order to maximize~\eqref{eq:Temp16} subject to condition~\eqref{eq:Temp15}, we specify $w$ by \[\sqrt[3]{\frac{p N_T L^2}{q\log(p N_T L/\epsilon)}}.\] Now let us assume that $N_T$ is $(1+o(1))k/p$. By replacing $\epsilon$ with $\dot{\epsilon}$, in the preceding results, and by replacing $k$ and $\epsilon$ with $k/q$ and $\dot{\epsilon}/q$, respectively, in Lemma~\ref{lem:DenseRankProb}, it follows that the sink node fails to decode the chunk $\omega$ w.p. b.a.b. $\epsilon/q$, so long as $N_T$ is larger than \begin{dmath}\label{eq:Temp17} \frac{1}{p}\left(k+(1+o(1))\left(\frac{k L}{w}+\sqrt{k\left(wq\log\frac{w q L}{\epsilon}\right)}+wq\log\frac{w q L}{\epsilon}\right)\right).\end{dmath} Taking a union bound over all the chunks, it follows that the sink node fails to decode all the chunks w.p. b.a.b. $\epsilon$, so long as $N_T$ is larger than~\eqref{eq:Temp17}. To ensure that the lower bound on $N_T$ is $(1+o(1))k/p$, all the terms in~\eqref{eq:Temp17}, excluding the first one, need to be $o(k/p)$. This condition is met so long as $q$ is \[\hspace{2.65 in}o\left(\frac{k}{L\log(kL/\epsilon)}\right).\hspace{2.65 in}\IEEEQEDopen\]\end{proof}

\begin{theorem}\label{thm:CapAchAveCodDelGeneral} The average coding delay of a CC with $q$ chunks over a network similar to Theorem~\ref{thm:CapAchCodDelGeneral} is larger than \begin{dmath*}\frac{1}{p}\left(k+(1+o(1))\left(\frac{k L}{w}+wq\log\frac{w q L}{\epsilon}\right)\right)\end{dmath*} w.p. b.a.b. $\epsilon$, so long as $q=o({k}/({L\log(kL/\epsilon)}))$, where $w\sim\left(kL/(q\log(kL/\epsilon))\right)^{\frac{1}{2}}$.\end{theorem}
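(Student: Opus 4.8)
The plan is to mirror the structure of the proof of Theorem~\ref{thm:CapAchCodDelGeneral}, but to track only the \emph{number} of dense (equivalently, received) packets pertaining to a fixed chunk, rather than also paying the penalty of the ``vertical'' rank loss that appears when applying Lemma~\ref{lem:VerticalT}. Concretely, I would start exactly as in Theorem~\ref{thm:CapAchCodDelGeneral}: replace $p$ and $\epsilon$ by $p/q$ and $\epsilon/q$, so that over each active partition the number of successful packets pertaining to a fixed chunk $\omega$ concentrates around $\varphi = pN_T/(wq)$, and with probability at least $1-\dot\epsilon/(w_Tq)$ exceeds $r \doteq (1-\gamma^*)\varphi$ with $\gamma^*\sim\sqrt{(1/\dot\varphi)\ln(w_Tq/\dot\epsilon)}$. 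The difference from Theorem~\ref{thm:CapAchCodDelGeneral} is that, for the \emph{average} coding delay, one averages the traffic out first; as in the passage from Theorem~\ref{thm:DenseCodesRegularBernoulliActualNon-IdenticalGeneral} to Theorem~\ref{thm:DenseCodesRegularBernoulliAverageNon-IdenticalGeneral}, this amounts to replacing $r$ by $\varphi$ throughout (the Chernoff slack $\gamma^*$ is no longer needed because we are dealing with expectations), and re-optimizing $w$ for the resulting expression.

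Next I would carry out the density recursion restricted to chunk $\omega$. The packets over the first link are all dense, so $\mathcal{D}(Q_1^j)\geq \varphi j$ for every $j$ in the active range. For $i\geq 2$, applying Lemma~\ref{lem:VerticalT} (as in Section~\ref{subsec:DC}) to the RBLT structure of the transfer matrix restricted to chunk $\omega$ shows that $\mathcal{D}(Q_i^j)\geq \varphi j - j(1+o(1))\log(w_Tq/\epsilon)$ fails with probability b.a.b. $ij\dot\epsilon/(w_Tq)$, provided $wq\log\frac{w_Tq}{\epsilon}=o(pN_T)$ — the analogue of~\eqref{eq:Temp15}. Propagating this to $i=L$, the number of dense packets pertaining to $\omega$ at the sink fails to exceed
\begin{equation*}
\frac{pN_T}{q} - O\!\left(\frac{pN_T L}{wq}\right) - O\!\left(w\log\frac{wqL}{\epsilon}\right)
\end{equation*}
with probability b.a.b. $\epsilon/q$; note the absence of the $\sqrt{pN_T w/q \,\log(wqL/\epsilon)}$ term, which in Theorem~\ref{thm:CapAchCodDelGeneral} came precisely from the Chernoff slack $\gamma^*\varphi$ and which vanishes once we average the traffic out. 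This matches the structure of~\eqref{eq:Temp5} with $pN_T$ replaced by $pN_T/q$ and $\epsilon$ by $\epsilon/q$.

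I would then set $N_T=(1+o(1))k/p$, feed the above bound (an $n_T\times(k/q)$ dense submatrix of the decoding matrix for chunk $\omega$) into Lemma~\ref{lem:DenseRankProb} with $k$ replaced by $k/q$ and $\epsilon$ by $\dot\epsilon/q$, and conclude that chunk $\omega$ fails to decode with probability b.a.b. $\epsilon/q$ whenever $N_T$ exceeds $\frac{1}{p}\bigl(k+(1+o(1))(kL/w+wq\log\frac{wqL}{\epsilon})\bigr)$. A union bound over all $q$ chunks gives failure probability b.a.b. $\epsilon$ with the same threshold on $N_T$, which is the claimed bound. To choose $w$, I would maximize $pN_T/q - O(pN_T L/(wq)) - O(w\log(wqL/\epsilon))$ — the new version of~\eqref{eq:Temp5} — subject to $wq\log\frac{w_Tq}{\epsilon}=o(pN_T)$; balancing the two $O(\cdot)$ terms gives $w\sim\bigl(kL/(q\log(kL/\epsilon))\bigr)^{1/2}$, and one checks that the side condition $q=o(k/(L\log(kL/\epsilon)))$ makes both residual terms $o(k/p)$, so the lower bound on $N_T$ is indeed $(1+o(1))k/p$, i.e.\ CC is capacity-achieving in this regime. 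The main thing to get right — and the only step requiring genuine care rather than bookkeeping — is verifying that averaging the traffic out legitimately lets one replace $r$ by $\varphi$ in the density recursion of Lemma~\ref{lem:VerticalT} without disturbing the per-partition independence structure that the RBLT machinery relies on; everything else is a transcription of the dense-code argument with $p\mapsto p/q$, $\epsilon\mapsto\epsilon/q$, $k\mapsto k/q$, followed by a union bound over chunks.
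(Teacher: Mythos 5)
Your proposal is correct and follows essentially the same route as the paper's proof: replace $r$ by $\varphi$ (so the square-root term of~\eqref{eq:Temp16} disappears), re-optimize $w$ as $\sqrt{pN_TL/(q\log(pN_TL/\epsilon))}$, apply Lemma~\ref{lem:DenseRankProb} with $k/q$ and $\dot{\epsilon}/q$, and union bound over the $q$ chunks to arrive at~\eqref{eq:Temp18} under the condition $q=o(k/(L\log(kL/\epsilon)))$. No meaningful differences from the paper's argument.
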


\begin{proof}The proof is similar to that of Theorem~\ref{thm:CapAchCodDelGeneral}, except that $r$ needs to be replaced with $\varphi$. This implies that the third term in~\eqref{eq:Temp16} disappears. Thus by specifying $w$ with \[\sqrt{\frac{p N_T L}{q\log(p N_T L/\epsilon)}}\] in order to maximize~\eqref{eq:Temp16}, excluding the third term, subject to condition~\eqref{eq:Temp15}, it follows that the sink node fails to decode all the chunks w.p. b.a.b. $\epsilon$, so long as $N_T$ is larger than \begin{dmath}\label{eq:Temp18} \frac{1}{p}\left(k+(1+o(1))\left(\frac{k L}{w}+wq\log\frac{w q L}{\epsilon}\right)\right).\end{dmath} The rest of the proof follows that of Theorem~\ref{thm:CapAchCodDelGeneral}.\end{proof}

In the case of unequal success parameters, the coding delay and the average coding delay are upper bounded as follows.

\begin{theorem}\label{thm:CapAchCodDelSpecial} The coding delay of a CC with $q$ chunks over a line network of $L$ links with deterministic regular traffics and Bernoulli losses with unequal parameters $\{p_i\}$ is larger~than \begin{dmath*}\frac{1}{p}\left(k+(1+o(1))\left(\frac{k L}{w}+\sqrt{k\left(wq\log\frac{w q L}{\epsilon}\right)}\right)\right)\end{dmath*} w.p. b.a.b. $\epsilon$, so long as $q=o\left(\gamma^3_{e} {k}/({L\log(kL/\epsilon)})\right)$, where $w\sim\gamma_e\left(kL^2/(q\log(kL/\epsilon))\right)^{\frac{1}{3}}$, $p\doteq \min_{1\leq i\leq L}p_i$, $\gamma_e\doteq\min_{1<i\leq L} \gamma_{e_i}$, and $\gamma_{e_i}\doteq |p_i-p_{i-1}|$.\end{theorem}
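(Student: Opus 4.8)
The plan is to mirror the proof of Theorem~\ref{thm:DenseCodesRegularBernoulliActualNon-Identical} with the same substitutions that take Theorem~\ref{thm:DenseCodesRegularBernoulliActualNon-IdenticalGeneral} to Theorem~\ref{thm:CapAchCodDelGeneral}. First I would replace each success parameter $p_i$ with $p_i/q$ and the failure budget $\epsilon$ with $\epsilon/q$, reflecting that a given transmission over the $i$th link is successful \emph{and} pertains to a fixed chunk $\omega$ w.p.\ $p_i/q$. Then $\varphi_i=p_iN_T/(wq)$ and $r_i=(1-\gamma^{*}_i)\varphi_i$ with $\gamma^{*}_i\sim\sqrt{(1/\dot{\varphi_i})\log(w_Tq/\dot{\epsilon})}$, and the Chernoff step shows that, for every active partition over the $i$th link, the number of successful packets pertaining to $\omega$ is at least $r_i$ w.p.\ b.a.b.\ $\dot{\epsilon}/(w_Tq)$.

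Next, restricting the decoding matrices to the rows of packets carrying $\omega$, I would rerun the density recursion. The transfer matrices so restricted still have the RBLT structure with the dense-block property (the random combinations taken on $\omega$ are independent of the chunk-selection process), so $\mathcal{D}(Q_1^j)\geq r_1 j$ for all $j$ over the first link, and for all other $i,j$ Lemma~\ref{lem:HorizontalT} gives that $\mathcal{D}(Q_i^j)\geq r_i j$ fails w.p.\ b.a.b.\ $ij\dot{\epsilon}/(w_Tq)$, provided the analogue of condition~\eqref{eq:UnequalParameters}, namely $wq\log\frac{w_Tq}{\epsilon}=o(\gamma_e pN_T)$, holds. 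With $p\doteq p_L$, $\varphi\doteq\varphi_L$, etc., a union bound over the $w_T$ active partitions then shows that the number of dense packets carrying $\omega$ at the sink fails to exceed
\[\frac{pN_T}{q}-O\left(\frac{pN_TL}{wq}\right)-O\left(\sqrt{\frac{pN_Tw}{q}\log\frac{wqL}{\epsilon}}\right)\]
w.p.\ b.a.b.\ $\epsilon/q$; because we invoke Lemma~\ref{lem:HorizontalT} rather than Lemma~\ref{lem:VerticalT}, the extra $O(w\log(\cdot))$ term that appears in the proof of Theorem~\ref{thm:CapAchCodDelGeneral} is absent here.

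I would then apply Lemma~\ref{lem:DenseRankProb} with $k$ replaced by the chunk size $\alpha=k/q$ and $\epsilon$ by $\dot{\epsilon}/q$, so that the sink recovers all of $\omega$ w.p.\ b.a.b.\ $\epsilon/q$ once $N_T$ exceeds the displayed bound, and take a union bound over the $q$ chunks to obtain the claimed statement w.p.\ b.a.b.\ $\epsilon$. To optimise, I would set $w\sim\gamma_e(kL^2/(q\log(kL/\epsilon)))^{1/3}$, which maximises the bound subject to the new condition, and substitute $N_T\sim(1+o(1))k/p$. Pinning down the admissible range of $q$ then amounts to forcing every term of the bound other than $k$ to be $o(k/p)$; the binding requirement comes from the $kL/w$ term, which for this $w$ reads $q=o(\gamma^3_e k/(L\log(kL/\epsilon)))$, while the requirements from the $\sqrt{k\,wq\log(wqL/\epsilon)}$ term, from $wq\log\frac{w_Tq}{\epsilon}=o(\gamma_e pN_T)$, and from $0<\gamma^{*}_i<1$ are all weaker.

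The step I expect to be the main obstacle is the bookkeeping around $\gamma_e$: one must verify that the factor $\gamma_e$ in the optimal $w$ is simultaneously compatible with $wq\log\frac{w_Tq}{\epsilon}=o(\gamma_e pN_T)$ and with $0<\gamma^{*}_i<1$ for every $i$ (which forces $\varphi_i=p_iN_T/(wq)$ to grow), and that the interplay of these constraints with the $kL/w$ term is exactly what produces the $\gamma^3_e$ in the bound on $q$. Everything else is a routine transcription of the dense-code argument with the $1/q$ scaling and a final union bound over chunks.
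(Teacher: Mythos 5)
Your proposal matches the paper's own proof essentially step for step: fix a chunk, replace $p_i$ and $\epsilon$ with $p_i/q$ and $\epsilon/q$ in the unequal-parameters dense-code argument (Lemma~\ref{lem:HorizontalT} giving the bound~\eqref{eq:Temp19} without the last term of~\eqref{eq:Temp16}, subject to condition~\eqref{eq:Temp20}), choose $w\sim\gamma_e\left(kL^2/(q\log(kL/\epsilon))\right)^{1/3}$, apply Lemma~\ref{lem:DenseRankProb} with $k/q$ and union-bound over the $q$ chunks, and extract $q=o\left(\gamma_e^3 k/(L\log(kL/\epsilon))\right)$ from requiring the $kL/w$ term to be $o(k/p)$. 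Your identification of the binding constraint and the provenance of the $\gamma_e^3$ factor is consistent with the paper, so the proposal is correct and takes the same route.
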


\begin{proof}\renewcommand{\IEEEQED}{} Fix a chunk $\omega$. By replacing $p$ and $\epsilon$ with $p/q$ and $\epsilon/q$, respectively, in the proof of Theorem~\ref{thm:DenseCodesRegularBernoulliActualNon-Identical}, it follows that the number of dense packets pertaining to the chunk $\omega$ at the sink node fails to be larger than \begin{eqnarray}\label{eq:Temp19}
 \lefteqn{\frac{p N_T}{q} - O\left(\frac{p N_T L}{wq}\right) -  } \nonumber\\
   && O\left(\sqrt{\frac{p N_T w}{q} \log\frac{w q L}{\epsilon}}\right),
\end{eqnarray} w.p. b.a.b. $\epsilon/q$, so long as \begin{equation}\label{eq:Temp20} wq\log\frac{w_T q}{\epsilon}=o(\gamma_e p N_T).\end{equation} The rest of the proof is similar to that of Theorem~\ref{thm:CapAchCodDelGeneral}, except that~\eqref{eq:Temp19} excludes the last term in~\eqref{eq:Temp16}, and the choice of $w$ needs to satisfy condition~\eqref{eq:Temp20}, instead of condition~\eqref{eq:Temp15}. By specifying $w$ with \[\sqrt[3]{\frac{\gamma^3_e p N_T L^2}{q\log(p N_T L/\epsilon)}}\] in order to maximize~\eqref{eq:Temp19} subject to condition~\eqref{eq:Temp20}, it follows that the sink node fails to decode all the chunks w.p. b.a.b. $\epsilon$, so long as $N_T$ is larger than \begin{dmath}\label{eq:Temp21} \frac{1}{p}\left(k+(1+o(1))\left(\frac{k L}{w}+\sqrt{k\left(wq\log\frac{w q L}{\epsilon}\right)}\right)\right).\end{dmath} In~\eqref{eq:Temp21}, each term, except the largest one, needs to be $o(k/p)$, and this condition is met so long as $q$ is \[\hspace{2.65 in}o\left(\frac{\gamma^3_e k}{L\log(kL/\epsilon)}\right).\hspace{2.65 in}\IEEEQEDopen\]\end{proof}

\begin{theorem}\label{thm:CapAchAveCodDelSpecial} The average coding delay of a CC with $q$ chunks over a network similar to Theorem~\ref{thm:CapAchCodDelSpecial} is larger than \begin{dmath*}\frac{1}{p}\left(k+(1+o(1))\left(\frac{k L}{w}\right)\right)\end{dmath*} w.p. b.a.b. $\epsilon$, so long as $q=o\left(\gamma_e {k}/(f(k){L\log(kL/\epsilon)})\right)$, where $w\sim\gamma_e k/(q f(k)\log(kL/\epsilon))$, and $f(k)$ goes to infinity, as $k$ goes to infinity, such that $f(k)=o(\gamma_e k/(\log(kL/\epsilon)))$.\end{theorem}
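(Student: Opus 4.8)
The plan is to obtain Theorem~\ref{thm:CapAchAveCodDelSpecial} as the average-coding-delay counterpart of Theorem~\ref{thm:CapAchCodDelSpecial}, exactly as Theorem~\ref{thm:DenseCodesRegularBernoulliAverageNon-Identical} is the average-coding-delay counterpart of Theorem~\ref{thm:DenseCodesRegularBernoulliActualNon-Identical}. Concretely, I would start from the proof of Theorem~\ref{thm:CapAchCodDelSpecial} --- which already carries out the $p_i\mapsto p_i/q$, $\epsilon\mapsto\epsilon/q$ reduction to the dense-code, unequal-parameters argument of Theorem~\ref{thm:DenseCodesRegularBernoulliActualNon-Identical} and the union bound over the $q$ chunks --- and make the single structural change that turns a coding-delay bound into an average-coding-delay bound: replace the Chernoff-reduced per-partition count $r_i\doteq(1-\gamma^{*}_i)\varphi_i$ by its expectation $\varphi_i=p_iN_T/(wq)$. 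Averaging the delay over the traffic replaces each binomial success count by its mean, so no Chernoff slack is incurred; the concrete effect is that the square-root term in~\eqref{eq:Temp19} --- which is exactly the contribution of the $\gamma^{*}$ deficit --- disappears, leaving only $pN_T/q-O(pN_TL/(wq))$ as the lower bound on the number of dense packets of a chunk at the sink.

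In detail, I would fix a chunk $\omega$, restrict $\mathcal{D}(Q_i^j)$, the transfer matrices $T'_i$, and the partition success counts to the packets pertaining to $\omega$, and set $\varphi=pN_T/(wq)$ with $p\doteq p_L$ and $r_i=\varphi_i$. Using the random block lower-triangular structure of $T'_i$ restricted to $\omega$ together with Lemma~\ref{lem:HorizontalT}, I would show that $\mathcal{D}(Q_i^j)\geq r_ij$ fails with probability bounded above by $ij\dot{\epsilon}/(w_Tq)$, provided condition~\eqref{eq:Temp20}, $wq\log(w_Tq/\epsilon)=o(\gamma_epN_T)$, holds; summing over the active partitions over the $L$ links, this makes the number of dense packets of $\omega$ at the sink at least $pN_T/q-O(pN_TL/(wq))$ except with probability b.a.b.\ $\epsilon/q$. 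Then I would apply Lemma~\ref{lem:DenseRankProb} to an $n\times\alpha$ dense sub-matrix with $\alpha=k/q$ and error $\epsilon/q$ (after the customary halving of $\epsilon$) to conclude that $\omega$ is decoded except with probability b.a.b.\ $\epsilon/q$ whenever $N_T$ exceeds the stated bound, and finally take a union bound over the $q$ chunks to get failure to decode everything with probability b.a.b.\ $\epsilon$.

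The one genuinely non-mechanical step --- and the place I expect the main obstacle --- is the choice of $w$. After the averaging, the only error term remaining in the per-chunk bound is $O(pN_TL/(wq))$, which is strictly decreasing in $w$ with nothing competing to push $w$ down; hence there is no interior optimum and one is forced to take $w$ just below the feasibility boundary set by condition~\eqref{eq:Temp20}. With $N_T\sim(1+o(1))k/p$ and $\log(w_Tq/\epsilon)=\Theta(\log(kL/\epsilon))$ over the relevant (polynomially bounded) ranges of $w$ and $q$, condition~\eqref{eq:Temp20} reads $wq\log(kL/\epsilon)=o(\gamma_ek)$, so I would set $w\sim\gamma_ek/(qf(k)\log(kL/\epsilon))$ for an auxiliary function $f(k)\to\infty$ with $f(k)=o(\gamma_ek/\log(kL/\epsilon))$ (the latter keeping $w$ a valid choice with $w/L\to\infty$, exactly as in Theorem~\ref{thm:DenseCodesRegularBernoulliAverageNon-Identical}). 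Substituting, $kL/w\sim qf(k)L\log(kL/\epsilon)/\gamma_e$, which is $o(k)$ --- so that the lower bound on $N_T$ is indeed $(1+o(1))k/p$ and takes the stated form $\frac{1}{p}(k+(1+o(1))(kL/w))$ --- precisely when $q=o(\gamma_ek/(f(k)L\log(kL/\epsilon)))$; I would also verify that this range of $q$ keeps the $\log(q/\epsilon)$ rank slack of Lemma~\ref{lem:DenseRankProb} and the logarithm-base discrepancies negligible against $kL/w$, so they are absorbed into the $(1+o(1))$ factor. Everything else is the routine $1/q$, $\epsilon/q$ bookkeeping and the chunk union bound already performed in Theorems~\ref{thm:CapAchCodDelGeneral}, \ref{thm:CapAchCodDelSpecial}, and~\ref{thm:DenseCodesRegularBernoulliAverageNon-Identical}.
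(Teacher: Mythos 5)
Your proposal is correct and follows essentially the same route as the paper: replace the Chernoff-reduced counts $r_i$ by their expectations $\varphi_i$ so that the square-root term of~\eqref{eq:Temp19} disappears, leaving~\eqref{eq:Temp22} to be maximized subject to condition~\eqref{eq:Temp20}, and then push $w$ to the feasibility boundary via an auxiliary $f(k)\to\infty$ with $f(k)=o(\gamma_e k/\log(kL/\epsilon))$, which is exactly the paper's choice $w=\gamma_e pN_T/(qf(pN_T)\log(pN_TL/\epsilon))$ once $N_T\sim k/p$ is substituted. The remaining steps (per-chunk application of Lemma~\ref{lem:DenseRankProb} with $k/q$ and $\epsilon/q$, the union bound over chunks, and reading off the condition $q=o(\gamma_e k/(f(k)L\log(kL/\epsilon)))$ from requiring $kL/w=o(k)$) coincide with the paper's argument.
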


\begin{proof}\renewcommand{\IEEEQED}{} The proof follows the same line as that of Theorem~\ref{thm:CapAchCodDelGeneral}, except that the choice of $w$ needs to maximize \begin{equation}\label{eq:Temp22} \frac{p N_T}{q} - O\left(\frac{p N_T L}{wq}\right)\end{equation} subject to condition~\eqref{eq:Temp20}. To do so, we specify $w$ by \[\frac{\gamma_e p N_T}{q f(p N_T)\log(p N_T L/\epsilon)},\] where $f(n)$ goes to infinity, as $n$ goes to infinity, such that $f(n)=o(\gamma_e n/(\log(n L/\epsilon)))$. The sink node fails to decode all the chunks w.p. b.a.b. $\epsilon$, so long as $N_T$ is larger than \begin{dmath}\label{eq:Temp23} \frac{1}{p}\left(k+(1+o(1))\left(\frac{k L}{w}\right)\right).\end{dmath} The second term in~\eqref{eq:Temp23} needs to be $o(k/p)$, and this condition is met so long as $q$ is \[\hspace{2.5in}o\left(\frac{\gamma_e k}{f(k)L\log(kL/\epsilon)}\right).\hspace{2.5in}\IEEEQEDopen\]\end{proof}

\vspace{-.35 cm}
\subsection{CC with Precoding: Capacity-Approaching with A Gap}\label{subsec:CCCAPP}
By the results of Section~\ref{subsec:CCCACH}, one can conclude that CC are not capacity-achieving if the size of the chunks does not comply with condition $\alpha=\omega({L\log(kL/\epsilon)})$.\footnote{For non-negative functions $f(n)$ and $g(n)$, we write $f(n)=\omega(g(n))$, if and only if $\lim_{n\rightarrow\infty}f(n)/g(n)=\infty$.} The analysis of Section~\ref{subsec:DC} further does not apply to CC with chunks of small sizes violating the above condition. From a computational complexity perspective, CC with chunks of smaller sizes are, however, of more practical interest (e.g., linear-time CC with constant-size chunks). In the following,~we study CC with chunks of a size constant in the message~size.

Let $\{p_i\}_{1\leq i\leq L}$ be an arbitrary sequence of success parameters, and let $p\doteq \min_{1\leq i\leq L}p_i$. Let the size of the chunks $\alpha$ ($=k/q$) be a constant in the message size $k$, i.e., $\alpha=O(1)$. Fix a chunk, and focus on the packets pertaining to that chunk. Let the time interval $(0,N_T]$ and its $w$ disjoint partitions be defined as before in~Section~\ref{subsec:DC}. Let $\varphi_{ij}$ be the number of packets (pertaining to the given chunk) in the partition $I_{ij}$, and $\varphi_i$ be the expected value of $\varphi_{ij}$. Let $\varphi\doteq \min_{1\leq i\leq L}\varphi_i$. Then, $\varphi_i=p_i N_T/wq$, and $\varphi=pN_T/wq$. Let $N_T=(1+\gamma_c)k/p$, where $0<\gamma_c<1$ is an arbitrarily small constant. By replacing $N_T$ with $(1+\gamma_c)k/p$, $\varphi=(1+\gamma_c)\alpha/w$, and $\varphi=O(1)$, as $w$ is a constant (otherwise, $\varphi$ goes to $0$, as $N_T$ goes to infinity).

By applying the Chernoff bound, it can be shown that $\Pr\{\varphi_{ij}<(1-\gamma^{*})\varphi\}\leq e^{-{\gamma^{*}}^2\dot{\varphi}}$, for every $0<\gamma^{*}<1$. Taking $e^{-{\gamma^{*}}^2\dot{\varphi}}\leq \dot{\gamma_b}/w_T$, it follows that $\varphi_{ij}$ is not larger than or equal to $r\doteq(1-\gamma^{*})\varphi$ w.p. b.a.b. $\dot{\gamma_b}/w_T$, where $\gamma^{*}$ is the smallest real number satisfying $\gamma^{*}\geq \sqrt{(1/\dot{\varphi})\ln(w_T/\dot{\gamma_b})}$, such that $r$ is an integer ($\gamma^{*}=O(1)$). Taking a union bound over all the active partitions of all links, it follows that $\varphi_{ij}$ is not larger than or equal to $r$ w.p. b.a.b. $\dot{\gamma_b}$.


Let $\mathcal{D}(Q_{i}^j)$ be the number of dense packets pertaining to the given chunk in the first $j$ active partitions over the $i\textsuperscript{th}$ link.

By applying Lemma~\ref{lem:HorizontalT}, it can be shown that: (i) for all $1\leq j\leq w-L+1$, $\mathcal{D}(Q_1^j)\geq rj$, (ii) for all $1<i\leq L$, the inequality $\mathcal{D}(Q_i^1)\geq r-\log(w_T/\dot{\gamma_b})$ fails w.p. b.a.b. $i\dot{\gamma_b}/w_T$, and (iii) for all the other $i,j$, the inequality $\mathcal{D}(Q_i^j)\geq r-j\log(w_T/\dot{\gamma_b})-\log((j+1)w_T/\dot{\gamma_b})$ fails w.p. b.a.b. $ij\dot{\gamma_b}/w_T$, so long as \begin{equation}\label{eq:Temp10}\alpha = \Omega \left(w^2\log\frac{w_T}{{\gamma_b}}\right).\end{equation}

By using the above results, it follows that the number of dense packets pertaining to the given chunk at the sink node fails to be lower bounded by \begin{dmath}\label{eq:Temp11} \frac{w_T\varphi}{L} - O\left(\frac{w_T}{L}\sqrt{\varphi\log\frac{w_T}{\gamma_b}}\right)-O\left(\frac{w_T}{L}\log\frac{w_T}{\gamma_b}\right)\end{dmath} w.p. b.a.b. $\gamma_b$. The lower bound is non-negative so long as $\alpha = \Omega\left(w\log({w_T}/{\gamma_b})\right)$, and this condition holds so long as condition~\eqref{eq:Temp10} holds. We specify $w$ by $\sqrt[3]{\alpha L^2/\log(\alpha L/\gamma_b)}$ to maximize~\eqref{eq:Temp11}. By replacing $w$ in~\eqref{eq:Temp10}, it can be rewritten as \begin{equation}\label{eq:Temp12} \alpha=\Omega\left(L^4\log\frac{L}{\gamma_b}\right).\end{equation}

By replacing $\gamma_b$ with $\dot{\gamma_b}$, and by applying Lemma~\ref{lem:DenseRankProb}, it follows that the sink node fails to decode the given chunk w.p. b.a.b. $\gamma_b$, so long as~\eqref{eq:Temp11} is larger than $\alpha+\log({1}/{\dot{\gamma_b}})$. By replacing our choice of $w$ in~\eqref{eq:Temp11}, it can be seen that, excluding the first term, the second term dominates the rest. By replacing $\varphi$ with $(1+\gamma_c)\alpha/w$, and by using the properties of the notation $\Omega(.)$, the decoding condition becomes \begin{equation}\label{eq:Temp13}\alpha=\Omega\left(\frac{L}{\gamma^3_c}\log\frac{L}{\gamma_b\gamma_c}\right).\end{equation} Thus, the given chunk is undecodable w.p. b.a.b. $\gamma_b$, so long as both conditions~\eqref{eq:Temp12} and~\eqref{eq:Temp13} are met. In other words, the expected fraction of undecodable chunks is bounded from above by $\gamma_b$. By using a martingale argument similar to the one in~\cite{HBJ:2011}, the concentration of the fraction of undecodable chunks around the expectation can be shown as follows.



\begin{lemma}\label{lem:Concentration}By applying a CC with chunks of size $\alpha$, satisfying both conditions~\eqref{eq:Temp12} and~\eqref{eq:Temp13}, the fraction of undecodable chunks at the sink node until time $N_T = (1+\gamma_c)k/p$ is larger than $(1+\gamma_a)\gamma_b$, w.p. b.a.b. $\epsilon$, so long as \begin{equation}\label{eq:Temp14}{\alpha^2}/{\gamma^2_a\gamma^2_b}=o({k}/{\log({1}/{\epsilon})}),\end{equation} where $0<\gamma_a,\gamma_b,\gamma_c<1$ are arbitrary constants.\end{lemma}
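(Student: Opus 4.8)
The plan is to establish the concentration result via the Azuma--Hoeffding inequality applied to a Doob martingale, following the template of the martingale argument in~\cite{HBJ:2011}. The setup from the preceding analysis already gives that, conditioned on a ``good'' traffic event of probability at least $1-\gamma_b$ (the event that $\varphi_{ij}\geq r$ for all active partitions), each individual chunk fails to be decoded with probability at most $\gamma_b$, so the expected number of undecodable chunks is at most $\gamma_b q$. What remains is to show that the actual number of undecodable chunks does not exceed $(1+\gamma_a)\gamma_b q$ except with small probability $\epsilon$. The source of randomness that we expose, one coordinate at a time, is the sequence of random choices made by all nodes over the whole interval $(0,N_T]$: which chunk each node selects at each of its transmission times, together with the random coefficients used in the linear combinations. (The traffic --- the schedule of transmission times and losses --- is held fixed, since we are bounding the fraction of undecodable chunks for a typical traffic, and the $\gamma_b$ budget already absorbs the atypical-traffic event.)

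First I would define, for a fixed good traffic, the random variable $X$ to be the number of undecodable chunks at the sink, and set up the Doob martingale $X_0=\mathbb{E}[X], X_1, \dots, X_m=X$ obtained by revealing the per-transmission chunk-and-coefficient choices in some fixed order, where $m$ is the total number of (transmission, node) slots in $(0,N_T]$. The key structural observation is the \emph{bounded-difference} property: changing the choice made at a single transmission slot at a single node can change the decodability status of only a bounded number of chunks. Indeed, a single transmitted packet pertains to exactly one chunk, so altering its chunk label or its coefficients can only affect whether that one chunk (at that node and, through propagation downstream, at later nodes) is decodable --- it cannot help or hurt any other chunk. Hence each martingale increment is bounded by a constant $c$ (independent of $k$; at worst a small constant accounting for the fact that a changed packet may be the ``pivotal'' one for its chunk). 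This is the step I expect to require the most care: one must argue that resampling one node's choice at one time cannot cascade into many chunks becoming undecodable, which relies on the chunks being processed independently (non-overlapping chunks) so that the decoding of distinct chunks uses disjoint sets of packets.

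Next I would apply Azuma--Hoeffding: $\Pr\{X-\mathbb{E}[X]\geq t\}\leq \exp(-t^2/(2c^2 m))$. Taking $t=\gamma_a\gamma_b q$ and using $\mathbb{E}[X]\leq \gamma_b q$ gives $\Pr\{X\geq(1+\gamma_a)\gamma_b q\}\leq\exp(-\gamma_a^2\gamma_b^2 q^2/(2c^2 m))$. The number of slots $m$ is of order $N_T$ per node times $L$ nodes, i.e. $m=O(L N_T)=O(Lk/p)=O(kL)$ up to constants (absorbing the $O(1/p)$ and the constant $(1+\gamma_c)$), while $q=k/\alpha$. Substituting, the exponent is of order $\gamma_a^2\gamma_b^2 (k/\alpha)^2/(kL)=\gamma_a^2\gamma_b^2 k/(\alpha^2 L)$. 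Requiring this to dominate $\log(1/\epsilon)$ --- so that the bound is at most $\epsilon$ --- is exactly condition~\eqref{eq:Temp14} (the network-length factor $L$ being subsumed into the $o(\cdot)$, or more precisely one carries it along; the clean statement $\alpha^2/\gamma_a^2\gamma_b^2=o(k/\log(1/\epsilon))$ follows after noting $L$ is a lower-order factor relative to the $o$-asymptotics in $k$). Finally I would combine the two failure modes by a union bound: the bad-traffic event costs $\gamma_b$ and is already accounted for in the statement's phrasing (``fraction $\dots$ larger than $(1+\gamma_a)\gamma_b$''), and the martingale-deviation event costs $\epsilon$; since conditions~\eqref{eq:Temp12} and~\eqref{eq:Temp13} are in force, each chunk's per-chunk failure probability bound $\gamma_b$ is valid, closing the argument.

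One subtlety worth flagging in the write-up: the quantity $\mathbb{E}[X]$ in the martingale is the expectation over code randomness for the \emph{fixed} good traffic, and we need $\mathbb{E}[X]\leq\gamma_b q$ to hold in that conditional sense. This follows because the per-chunk decoding-failure bound derived above was obtained by first conditioning on $\varphi_{ij}\geq r$ for all active partitions and then bounding the code-randomness failure probability by $\gamma_b$ via Lemma~\ref{lem:HorizontalT} and Lemma~\ref{lem:DenseRankProb}; linearity of expectation over the $q$ chunks then gives the bound on $\mathbb{E}[X]$ directly, with no further conditioning needed. With that in hand the deviation inequality and the parameter bookkeeping are routine, and the lemma follows.
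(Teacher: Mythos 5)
Your overall route is the same one the paper takes: the paper gives no detailed proof of Lemma~\ref{lem:Concentration} and simply invokes ``a martingale argument similar to the one in~\cite{HBJ:2011},'' i.e., a Doob martingale over the nodes' random per-transmission choices, bounded differences justified by the chunks being non-overlapping, Azuma--Hoeffding, and $\mathbb{E}[X]\leq \gamma_b q$ by linearity from the per-chunk bound. So the skeleton matches. However, two points in your execution are genuinely problematic. First, the conditioning setup: the event $\{\varphi_{ij}\geq r \text{ for all active partitions}\}$ is \emph{not} a traffic event. With deterministic regular transmissions the only traffic randomness is the Bernoulli losses, and $\varphi_{ij}$ in Section III-C counts the successful packets \emph{pertaining to the given chunk}, so it depends on the chunk selections --- exactly the coordinates you expose in the martingale. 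Freezing ``the traffic'' and conditioning on this event is doubly troublesome: for a fixed bad loss pattern the per-chunk failure bound $\gamma_b$ (and hence $\mathbb{E}[X]\leq\gamma_b q$) need not hold, and conditioning on a good-count event destroys the independence of the coordinates that the bounded-difference/Azuma machinery relies on; moreover the simultaneous good-count event over all $q$ chunks only has probability at least $1-q\dot{\gamma_b}$, which is vacuous. The clean fix (and what the paper's sketch implies) is to run the martingale unconditionally over \emph{all} independent coordinates --- loss indicators, chunk labels, coefficients --- using that the per-chunk failure probability is at most $\gamma_b$ unconditionally, so $\mathbb{E}[X]\leq\gamma_b q$ with no conditioning. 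Also, flipping one chunk label affects the old and the new chunk, so the difference bound is $2$, not $1$ (harmless, but your ``only that one chunk'' claim is off).

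Second, and more substantively, your bookkeeping does not deliver condition~\eqref{eq:Temp14} as stated. With $m=\Theta(LN_T)=\Theta(Lk/p)$ exposed coordinates and difference bound $O(1)$, Azuma gives a deviation probability of order $\exp\left(-\Theta\left(\gamma_a^2\gamma_b^2 q^2/m\right)\right)=\exp\left(-\Theta\left(\gamma_a^2\gamma_b^2 p k/(\alpha^2 L)\right)\right)$, so your argument proves the lemma only under $\alpha^2 L/\gamma_a^2\gamma_b^2=o(k/\log(1/\epsilon))$, which is stronger than~\eqref{eq:Temp14} by the factor $L$ (absorbing the constant $p$ is fine; absorbing $L$ is not). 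Your dismissal of $L$ as ``a lower-order factor'' is unjustified in this paper's regime: $L$ is allowed to grow with $k$ --- indeed conditions~\eqref{eq:Temp12} and~\eqref{eq:Temp13} scale polynomially in $L$ --- so the extra factor genuinely strengthens the hypothesis. To actually recover~\eqref{eq:Temp14} you would need a more economical exposure (for instance, exploiting that each chunk's failure indicator depends only on the coordinates of the packets carrying that chunk, or first conditioning on the per-chunk counts so that the rank-failure indicators become independent and only the count statistics need the martingale), or else state the lemma with the additional $L$ dependence. As written, this step is a gap between what your argument establishes and what the lemma claims.
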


By the result of Lemma~\ref{lem:Concentration}, the fraction of chunks which are not decodable until time $N_T$ becomes larger than $(1+\gamma_a)\gamma_b$, w.p. b.a.b. $\epsilon$. Since $\gamma_a,\gamma_b$ are non-zero constants, a CC, alone, does not decode all the chunks. However, the completion of decoding of all the chunks is guaranteed by devising a proper precoding scheme~\cite{HBJ:2011}. The precoding works as follows: The set of $k$ message vectors at the source node constitute the input of a capacity-achieving (c.-a.) erasure code, called \emph{precode}. The rate of the precode is $1-(1+\gamma_a)\gamma_b$ (i.e., the precode decoder can correct up to a fraction $(1+\gamma_a)\gamma_b$ of erasures), and the number of the coded packets at the output of the precode, called \emph{intermediate packets}, is $\left(1+(1+\gamma_a)\gamma_b+O(\gamma^2_b)\right)k$. By applying a CC with chunks of size $\alpha$, satisfying conditions~\eqref{eq:Temp12},~\eqref{eq:Temp13} and~\eqref{eq:Temp14}, the fraction of the intermediate packets that are not recoverable at the output of the CC decoder until time $(1+\gamma_c)\left(1+(1+\gamma_a)\gamma_b+O(\gamma^2_b)\right)\frac{k}{p}$ is larger than $(1+\gamma_a)\gamma_b$, w.p. b.a.b. $\epsilon$. Then, the precode decoder can recover all the $k$ message vectors from the set of recovered intermediate packets. Therefore, the coding delay of a CC with precoding (CCP) is upper bounded as follows.

\begin{theorem}\label{thm:CapAppCodDelGeneral} The coding delay of a CCP with chunks of size $\alpha$ and a c.-a. erasure code of rate $1-\gamma_a$, over a line network of $L$ links with deterministic regular traffics and Bernoulli losses with parameters $\{p_i\}$ is larger than $(1+\gamma_c)\left(1+(1+\gamma_a)\gamma_b+O(\gamma^2_b)\right)\frac{k}{p}$, w.p. b.a.b.~$\epsilon$, so long~as \[\alpha=\Omega\left(\left\{\left(\frac{L}{\gamma^3_{c}}\log\frac{L}{\gamma_b\gamma_c}\right),\left(L^4 \log \frac{L}{\gamma_b}\right)\right\}\right),\] and $\alpha^2/\gamma^2_a\gamma^2_b=o(k/\log(1/\epsilon))$, where $0<\gamma_a,\gamma_b,\gamma_c<1$ are arbitrary constants, and $p\doteq \min_{1\leq i\leq L}p_i$.\end{theorem}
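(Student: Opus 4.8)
The plan is to bootstrap the stand-alone CC analysis that culminates in Lemma~\ref{lem:Concentration} by layering a capacity-achieving erasure code on top of it: the CC is made to deliver all but a small non-zero fraction of the chunks, and the precode repairs the remainder. The only genuinely new bookkeeping is (i) accounting for the overhead the precode introduces, and (ii) checking that the conditions on the chunk size $\alpha$ are insensitive to this overhead. First I would fix the precode to be a capacity-achieving erasure code whose decoder corrects an erasure fraction $(1+\gamma_a)\gamma_b$, i.e.\ of rate $1-(1+\gamma_a)\gamma_b$, exactly as in the precoding construction described just above the theorem. Feeding the $k$ message vectors through it produces $m$ intermediate packets, with $m = k/(1-(1+\gamma_a)\gamma_b) = \bigl(1+(1+\gamma_a)\gamma_b+O(\gamma_b^2)\bigr)k$; in particular $m=\Theta(k)$.

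Next I would treat the $m$ intermediate packets as the ``message'' of a stand-alone CC with chunks of size $\alpha$ and apply Lemma~\ref{lem:Concentration} with $k$ replaced by $m$ and with the same $p\doteq\min_{1\leq i\leq L}p_i$. This yields: until time $N_T=(1+\gamma_c)m/p$, the fraction of undecodable chunks at the sink exceeds $(1+\gamma_a)\gamma_b$ only w.p.\ b.a.b.\ $\epsilon$, provided that conditions~\eqref{eq:Temp12} and~\eqref{eq:Temp13} hold---these involve only $L$, $\gamma_b$ and $\gamma_c$ and are therefore untouched by the substitution---and that condition~\eqref{eq:Temp14} holds with $k$ replaced by $m$; since $m=\Theta(k)$ this last requirement is equivalent to $\alpha^2/\gamma_a^2\gamma_b^2=o(k/\log(1/\epsilon))$.

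I would then note that each undecodable chunk accounts for at most $\alpha$ unrecovered intermediate packets, whereas every decodable chunk delivers all of its intermediate packets; hence, on the same high-probability event, the fraction of unrecovered intermediate packets is also b.a.b.\ $(1+\gamma_a)\gamma_b$. Because the precode decoder corrects up to a fraction $(1+\gamma_a)\gamma_b$ of erased intermediate packets, it recovers all $k$ original message vectors, so decoding is complete by time $N_T$. Substituting $N_T=(1+\gamma_c)m/p=(1+\gamma_c)\bigl(1+(1+\gamma_a)\gamma_b+O(\gamma_b^2)\bigr)k/p$ and collecting the chunk-size conditions---the maximum of the two $\Omega(\cdot)$ terms arising from~\eqref{eq:Temp12} and~\eqref{eq:Temp13}, together with~\eqref{eq:Temp14}---gives exactly the stated bound.

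The main obstacle, and really the only point requiring care, is the interplay between the precode overhead and the chunk-size conditions: one must verify that replacing $k$ by $m$ leaves~\eqref{eq:Temp12} and~\eqref{eq:Temp13} unchanged (they are $k$-free), preserves~\eqref{eq:Temp14} up to the constant factor $1/(1-(1+\gamma_a)\gamma_b)$, and that the count $m=\bigl(1+(1+\gamma_a)\gamma_b+O(\gamma_b^2)\bigr)k$ is precisely the geometric-series expansion of $k/(1-(1+\gamma_a)\gamma_b)$, which is legitimate only because the precode is capacity-achieving. Everything else is a direct invocation of Lemma~\ref{lem:Concentration} together with the defining erasure-correction property of the precode, following the martingale-and-precoding argument of~\cite{HBJ:2011}.
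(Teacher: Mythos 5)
Your proposal is correct and follows essentially the same route as the paper: the paper's own justification of this theorem is precisely the precoding construction described just before it, i.e., apply the stand-alone CC analysis (culminating in Lemma~\ref{lem:Concentration}, with the message size scaled to the $\left(1+(1+\gamma_a)\gamma_b+O(\gamma^2_b)\right)k$ intermediate packets) so that at most a fraction $(1+\gamma_a)\gamma_b$ of chunks remain undecodable by time $(1+\gamma_c)\left(1+(1+\gamma_a)\gamma_b+O(\gamma^2_b)\right)\frac{k}{p}$, and then let the capacity-achieving precode repair that fraction; the conditions~\eqref{eq:Temp12},~\eqref{eq:Temp13} are $k$-free and~\eqref{eq:Temp14} is unaffected up to constants, exactly as you argue. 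Your explicit bookkeeping of the precode rate $1-(1+\gamma_a)\gamma_b$ matches the paper's construction paragraph, so no gap remains.
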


In the case of the average coding delay of a CC with precoding, the following can be shown similar to Theorem~\ref{thm:CapAppCodDelGeneral} by replacing $r$ with $\varphi$, and hence the proof is omitted.

\begin{theorem}\label{thm:CapAppAveCodDelGeneral} The average coding delay of a CCP with chunks of size $\alpha$ and a c.-a. erasure code of rate $1-\gamma_a$, over a network similar to Theorem~\ref{thm:CapAppCodDelGeneral} is larger than $(1+\gamma_c)\left(1+(1+\gamma_a)\gamma_b+O(\gamma^2_b)\right)\frac{k}{p}$, w.p. b.a.b. $\epsilon$, so long as \[\alpha=\Omega\left(\frac{L}{\gamma_{c}}\log\frac{L}{\gamma_b \gamma_c}\right),\] and $\alpha^2/\gamma^2_a\gamma^2_b=o(k/\log(1/\epsilon))$, where $0<\gamma_a,\gamma_b,\gamma_{c}<1$ are arbitrary constants.\end{theorem}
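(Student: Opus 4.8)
The plan is to retrace the proof of Theorem~\ref{thm:CapAppCodDelGeneral} essentially verbatim, applying the device that turns the coding-delay results of Section~\ref{subsec:DC} into average-coding-delay results (as in Theorems~\ref{thm:DenseCodesRegularBernoulliAverageNon-IdenticalGeneral} and~\ref{thm:DenseCodesRegularBernoulliAverageNon-Identical}): replace the Chernoff-type high-probability lower bound $r=(1-\gamma^{*})\varphi$ on the number of chunk-pertaining packets per active partition by its expectation $\varphi$ (equivalently, set $\gamma^{*}=0$). Concretely, I would fix a chunk, take $N_T=(1+\gamma_c)\left(1+(1+\gamma_a)\gamma_b+O(\gamma_b^2)\right)k/p$, and restrict attention to the packets of that chunk; then $\varphi=(1+\gamma_c)\alpha/w$ is a constant in $k$, exactly as in Section~\ref{subsec:CCCAPP}.

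First I would re-run the three-part estimate on $\mathcal{D}(Q_i^j)$, the number of dense packets of the fixed chunk among the first $j$ active partitions over the $i\textsuperscript{th}$ link, obtained from Lemma~\ref{lem:HorizontalT} in the proof of Theorem~\ref{thm:CapAppCodDelGeneral}. With $r$ replaced by $\varphi$ the recursion is structurally unchanged, so the chunk-density at the sink fails to be at least
\[\frac{w_T\varphi}{L}-O\!\left(\frac{w_T}{L}\log\frac{w_T}{\gamma_b}\right)\]
with probability bounded above by $\gamma_b$; that is, the middle ($\sqrt{\cdot}$) term of~\eqref{eq:Temp11}, which was exactly the accumulated slack $(w_T/L)\,\gamma^{*}\varphi$, has disappeared. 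I would then re-specify $w$ so as to maximize this two-term quantity, balancing the deficit of the leading term (of order $L\alpha/w$) against the correction (of order $w\log(w_T/\gamma_b)$) --- a square-root-type choice in the spirit of the $w$ used for the dense-code average delay in Theorem~\ref{thm:DenseCodesRegularBernoulliAverageNon-IdenticalGeneral} --- and then apply Lemma~\ref{lem:DenseRankProb} to a dense submatrix of this size against a chunk of size $\alpha$. The point is that, after substituting $\varphi=(1+\gamma_c)\alpha/w$, the two requirements~\eqref{eq:Temp12} and~\eqref{eq:Temp13} of the coding-delay case collapse into the single condition $\alpha=\Omega\!\left((L/\gamma_c)\log(L/(\gamma_b\gamma_c))\right)$; under it the fixed chunk is undecodable by time $N_T$ with probability bounded above by $\gamma_b$, hence the expected fraction of undecodable chunks at time $N_T$ is bounded above by $\gamma_b$.

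Next I would feed this expectation bound into the concentration step: the martingale argument behind Lemma~\ref{lem:Concentration} acts on the coding coefficients and the chunk-assignment process and is untouched by the $r\to\varphi$ substitution, so under the additional requirement $\alpha^2/\gamma_a^2\gamma_b^2=o(k/\log(1/\epsilon))$ (condition~\eqref{eq:Temp14}) the fraction of undecodable chunks exceeds $(1+\gamma_a)\gamma_b$ only with probability bounded above by $\epsilon$. Finally I would append the precode exactly as in Section~\ref{subsec:CCCAPP}: a capacity-achieving erasure code of rate $1-\gamma_a$ outputs $\left(1+(1+\gamma_a)\gamma_b+O(\gamma_b^2)\right)k$ intermediate packets; running the CC on these over an interval of length $(1+\gamma_c)\left(1+(1+\gamma_a)\gamma_b+O(\gamma_b^2)\right)k/p$ leaves at most a fraction $(1+\gamma_a)\gamma_b$ of them unrecovered with probability bounded above by $\epsilon$, and the precode decoder then reconstructs all $k$ message vectors; this bounds the average coding delay by the displayed quantity.

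Two steps need genuine care rather than transcription. The first is justifying the $r\to\varphi$ replacement in the averaged model, i.e.\ that averaging the coding delay over the traffic ensemble indeed permits discarding the per-partition slack $\gamma^{*}\varphi$ with no residual error term --- this is the same point already used for dense codes in the analyses underlying Theorems~\ref{thm:DenseCodesRegularBernoulliAverageNon-IdenticalGeneral} and~\ref{thm:DenseCodesRegularBernoulliAverageNon-Identical}. The second, and the main obstacle, is the re-optimization of $w$ for the two-term version of~\eqref{eq:Temp11} together with the verification that the constraint which in the coding-delay analysis produced~\eqref{eq:Temp12} is now dominated, so that the full set of requirements on $\alpha$ genuinely reduces to the single $\Omega(\cdot)$ bound in the statement; everything downstream --- the union bound over chunks, the concentration inequality, and the precoding accounting --- is identical to the proof of Theorem~\ref{thm:CapAppCodDelGeneral}.
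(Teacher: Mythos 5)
Your overall pipeline---fix a chunk, substitute the expectation $\varphi$ for the Chernoff-level $r$, re-optimize $w$, then reuse Lemma~\ref{lem:Concentration} and the precoding accounting---is exactly what the paper intends (its proof is omitted with precisely the instruction ``replace $r$ with $\varphi$''), and you are right that only the square-root term of~\eqref{eq:Temp11} disappears while the $O\left(\frac{w_T}{L}\log\frac{w_T}{\gamma_b}\right)$ term survives, since it comes from the transfer-matrix rank deficiencies (code randomness) rather than the traffic. The genuine gap is the step you flag as ``the main obstacle'' and then leave unresolved: the choice of $w$ and the reduction of the constraints on $\alpha$ to the single bound in the statement. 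Your ``square-root-type'' choice, balancing $\alpha L/w$ against $w\log(w_T/\gamma_b)$, gives $w\sim\sqrt{\alpha L/\log(\alpha L/\gamma_b)}$; but the density-propagation condition~\eqref{eq:Temp10}, $\alpha=\Omega\left(w^2\log\frac{w_T}{\gamma_b}\right)$, does not go away under $r\to\varphi$ (compare Theorem~\ref{thm:CapAchAveCodDelGeneral}, where the $wq\log$ term from Lemma~\ref{lem:HorizontalT}/\ref{lem:VerticalT} survives the averaging), and for that $w$ it reads $\alpha=\Omega(\alpha L)$, which is impossible as $L$ grows. So with your $w$ the steps (ii)--(iii) that produce~\eqref{eq:Temp11} are no longer justified, and your claim that the constraint behind~\eqref{eq:Temp12} is ``dominated'' is not just unverified---for this choice it is violated.

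The paper's own template for the averaged, precoded case is visible in the proof of Theorem~\ref{thm:CapAppAveCodDelSpecial}: do not balance, but take $w=\Theta(L/\gamma_c)$, i.e., just large enough that the $O(\alpha L/w)$ deficit is a fixed fraction of the slack $\gamma_c\alpha$, and then read the admissible $\alpha$ off the propagation condition and off the requirement that the surviving correction terms (here $w\log(wL/\gamma_b)$ plus the analog of~\eqref{eq:Temp13}) stay below $\gamma_c\alpha$. Note, however, that with $w=\Theta(L/\gamma_c)$ condition~\eqref{eq:Temp10} yields a lower bound on $\alpha$ with a different $L$- and $\gamma_c$-dependence than the single $\Omega\left(\frac{L}{\gamma_c}\log\frac{L}{\gamma_b\gamma_c}\right)$ in the statement, so the asserted ``collapse'' of the two conditions of the coding-delay case into one is exactly the delicate accounting the proof must supply; it is not an automatic consequence of the substitution $r\to\varphi$, and your proposal, as written, does not supply it.
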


\begin{table*}
  \vspace{-.05cm}
  \caption{Comparison of Chunked Codes over Line Networks with Various Traffics}
  \vspace{-.25cm}
  \hspace{-0.25in}
    \begin{tabular}{|p{1.275cm}|p{1cm}|c|c|c|c|}
    \hline
    \multirow{2}{*}{\hspace{.25 cm}\vspace{-.1cm}Traffic} & \multirow{2}{*}{\hspace{-.275 cm}$\vspace{-.1cm}\begin{array}{c} \text{Success} \\ \text{Parameters} \end{array}$} & $\begin{array}{c} \text{Overhead } \text{(}\eta\text{)} \\ \text{and} \end{array}$ & \multirow{2}[4]{*}{\vspace{.1cm} $\begin{array}{c} \text{Size of Chunks} \\ \text{(}\alpha\text{)}\end{array}$} & \multirow{2}[4]{*}{\vspace{.1cm}$w$} & \multirow{2}[4]{*}{\vspace{.1cm}Comments} \\
      &   & \hspace{.1cm}Average Overhead ($\bar{\eta}$)&   &   &  \\
    \hline
    $\begin{array}{c}\hspace{-.3cm} \text{Arbitrary} \\ \hspace{-.3cm}  \text{Deterministic}\end{array}$ & \hspace{.35cm} -  & $\eta=\bar{\eta}=O\left(kL\left(\frac{1}{\alpha} \log\frac{kL}{\epsilon}\right)^{\frac{1}{3}}\right)$ & $\omega\left({L^3\log\frac{kL}{\epsilon}}\right)$ &  -  & \multirow{5}[10]{*}{\vspace{-0.75cm}$\begin{array}{c} m= \frac{kw}{\alpha}\log\left(\frac{kLw}{\alpha\epsilon}\right) \\ f(k)=o\left(\frac{\gamma_e k}{\log\frac{kL}{\epsilon}}\right) \\ \lim_{k\rightarrow\infty}f(k)=\infty \\ \gamma_{e_i}=|p_i-p_{i-1}|\\ \gamma_e=\min_{1<i\leq L}\gamma_{e_i}\\ p=\min_{1\leq i\leq L}p_i \end{array}$} \\
\cline{1-5}  \multirow{4}{*}{\vspace{-0.75cm}$\begin{array}{c}\hspace{-.335cm}\text{Deterministic} \\ \hspace{-.335cm}\text{Regular} \\ \hspace{-.335cm}\text{Transmissions} \\ \hspace{-.335cm}\text{and} \\ \hspace{-.335cm}\text{Bernoulli} \\ \hspace{-.335cm}\text{Losses} \end{array}$} & \multirow{2}{*}{\vspace{-.25cm}Arbitrary} & $\eta = \frac{1}{p}\left((1+o(1))\left(\frac{kL}{w}+k^{\frac{1}{2}}m^{\frac{1}{2}}+m\right)\right)$ & \multirow{2}[4]{*}{\vspace{-.15cm}$\omega\left({L\log\frac{kL}{\epsilon}}\right)$} & $\left(\frac{\alpha L^2}{\log\frac{kL}{\epsilon}}\right)^{\frac{1}{3}}$ &  \\
\cline{3-3}\cline{5-5}      &   & $\bar{\eta}=\frac{1}{p}\left((1+o(1))\left(\frac{kL}{w}+m\right)\right)$ &   & $\left(\frac{\alpha L}{\log\frac{kL}{\epsilon}}\right)^{\frac{1}{2}}$ & \\
\cline{2-5}      & \multirow{2}{*}{\vspace{-.25cm}Unequal} & $\eta = \frac{1}{p}\left((1+o(1))\left(\frac{kL}{w}+k^{\frac{1}{2}}m^{\frac{1}{2}}\right)\right)$ & $\omega\left(\frac{L}{\gamma^3_e} \log\frac{kL}{\epsilon}\right)$ & $\left(\frac{\gamma^3_e\alpha L^2}{\log\frac{kL}{\epsilon}}\right)^{\frac{1}{3}}$ & \\
\cline{3-5}      &   & $\bar{\eta}=\frac{1}{p}\left((1+o(1))\left(\frac{kL}{w}\right)\right)$ & $\omega\left(f(k)\left(\frac{L}{\gamma_e} \log\frac{kL}{\epsilon}\right)\right)$ & $\frac{1}{f(k)}\left(\frac{\gamma_e\alpha}{\log\frac{kL}{\epsilon}}\right)$ & \\
    \hline
    \end{tabular}
  \label{tab:TableI}
\end{table*}

\begin{table*}
\vspace{-.1cm}
\caption{Comparison of Chunked Codes with Precoding (A Capacity-Achieving Erasure Code) over Line Networks with Various Traffics}
\vspace{-.25cm}
\hspace{-0.275in}
    \begin{tabular}{|p{1.275cm}|p{1cm}|c|c|c|c|}
    \hline
    \multirow{2}{*}{\hspace{.25 cm}\vspace{-.1cm}Traffic} & \multirow{2}{*}{\hspace{-.275 cm}$\vspace{-.1cm}\begin{array}{c} \text{Success} \\ \text{Parameters} \end{array}$} & $\begin{array}{c} \text{Overhead } \text{(}\eta\text{)} \\ \text{and} \end{array}$ & \multicolumn{2}{c|}{\multirow{2}{*}{\vspace{-.1cm}$\begin{array}{c} \text{Size of Chunks} \\ \text{(}\alpha\text{)}\end{array}$}} & \multirow{2}{*}{\vspace{-.1cm}Comments} \\      &   & \hspace{.1cm}Average Overhead ($\bar{\eta}$) & \multicolumn{2}{c|}{} &  \\
    \hline
    $\begin{array}{c}\hspace{-.3cm} \text{Arbitrary} \\ \hspace{-.3cm}  \text{Deterministic}\end{array}$ & \hspace{.35cm} - & \hspace{.1cm}$\eta=\bar{\eta}=\gamma_o k$ & $\Omega\left(\frac{L^3}{\gamma^3_c}\log\frac{L}{\gamma_b\gamma_c}\right)$ & \multicolumn{1}{r|}{\multirow{5}[10]{*}{\vspace{-.25cm}$o\left(\sqrt{\frac{\gamma^2_a\gamma^2_b k}{\log\frac{1}{\epsilon}}}\right)$}} & \multirow{5}[10]{*}{\vspace{-.1cm}$\begin{array}{c} 0<\gamma_a,\gamma_b,\gamma_c<1 \\ \{\gamma_a,\gamma_b,\gamma_c\}=O(1) \\ \gamma_o=\gamma_c+(1+\gamma_c)\gamma'_o \\ \gamma'_o = (1+\gamma_a)\gamma_b+O(\gamma^2_b) \\  \gamma_{e_i}=|p_i-p_{i-1}| \\ \gamma_e=\min_{1<i\leq L}\gamma_{e_i} \\ p=\min_{1\leq i\leq L}p_i \end{array}$} \\
\cline{1-2}\cline{3-3}\cline{4-4}    \multirow{4}{*}{$\begin{array}{c}\hspace{-.335cm}\text{Deterministic} \\ \hspace{-.335cm}\text{Regular} \\ \hspace{-.335cm}\text{Transmissions} \\ \hspace{-.335cm}\text{and} \\ \hspace{-.335cm}\text{Bernoulli} \\ \hspace{-.335cm}\text{Losses} \end{array}$} & \multirow{2}[4]{*}{Arbitrary} &  {$\eta=\gamma_o\frac{k}{p}$} & $\Omega\left(\left\{\left(\frac{L}{\gamma^3_c}\log\frac{L}{\gamma_b\gamma_c}\right),\left(L^4\log\frac{L}{\gamma_b}\right)\right\}\right)^{\textcolor[rgb]{1.00,1.00,1.00}{\frac{1}{2}}}_{\textcolor[rgb]{1.00,1.00,1.00}{\frac{1}{2}}}$ &   &  \\ \cline{3-3}
\cline{4-4}      &   &  $\bar{\eta}=\gamma_o\frac{k}{p}$ & $\Omega\left(\frac{L}{\gamma_c}\log\frac{L}{\gamma_b\gamma_c}\right)^{\textcolor[rgb]{1.00,1.00,1.00}{\frac{1}{2}}}_{\textcolor[rgb]{1.00,1.00,1.00}{\frac{1}{2}}}$ &   &  \\ \cline{3-3}
\cline{2-2}\cline{4-4}      & \multirow{2}[4]{*}{Unequal} &  $\eta=\gamma_o\frac{k}{p}$ & $\Omega\left(\left\{\left(\frac{L}{\gamma^3_c}\log\frac{L}{\gamma_b\gamma_c}\right),\left(\frac{L}{\gamma^3_e}\log\frac{L}{\gamma_b\gamma_e}\right)\right\}\right)^{\textcolor[rgb]{1.00,1.00,1.00}{\frac{1}{2}}}_{\textcolor[rgb]{1.00,1.00,1.00}{\frac{1}{2}}}$ &   &  \\ \cline{3-3}
\cline{4-4}      &   & $\bar{\eta}=\gamma_o\frac{k}{p}$  & $\Omega\left(\frac{L}{\gamma^2_e\gamma_c}\log\frac{L}{\gamma_b\gamma_c}\right)^{\textcolor[rgb]{1.00,1.00,1.00}{\frac{1}{2}}}_{\textcolor[rgb]{1.00,1.00,1.00}{\frac{1}{2}}}$ &   &  \\ \cline{3-3}
    \hline
    \end{tabular}
  \label{tab:TableII}
  \vspace{-.45cm}
\end{table*}

In the special case of unequal success parameters, the coding delay and the average coding delay of CC with precoding are upper bounded as follows. The proofs follow the same line as in the general case except that a new set of conditions needs to be satisfied based on the assumption that no two success parameters are equal.

\begin{theorem}\label{thm:CapAppCodDelSpecial} The coding delay of a CCP with chunks of size $\alpha$ and a c.-a. erasure code of rate $1-\gamma_a$, over a line network of $L$ links with deterministic regular traffics and Bernoulli losses with unequal parameters $\{p_i\}$ is larger than $(1+\gamma_c)\left(1+(1+\gamma_a)\gamma_b+O(\gamma^2_b)\right)\frac{k}{p}$, w.p. b.a.b. $\epsilon$, so long as \[\alpha=\Omega\left(\left\{\left(\frac{L}{\gamma^3_{c}}\log\frac{L}{\gamma_b\gamma_c}\right),\left(\frac{L}{\gamma^3_{e}} \log \frac{L}{\gamma_e\gamma_b}\right)\right\}\right),\] and $\alpha^2/\gamma^2_a\gamma^2_b=o(k/\log(1/\epsilon))$, where $0<\gamma_a,\gamma_b,\gamma_c<1$ are arbitrary constants, $p\doteq \min_{1\leq i\leq L}p_i$, $\gamma_e\doteq\min_{1<i\leq L} \gamma_{e_i}$, and $\gamma_{e_i}\doteq |p_i-p_{i-1}|$.\end{theorem}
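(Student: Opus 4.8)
The plan is to mirror the proof of Theorem~\ref{thm:CapAppCodDelGeneral}, substituting the analysis of dense codes with \emph{unequal} success parameters (i.e., the line of Theorem~\ref{thm:DenseCodesRegularBernoulliActualNon-Identical}) for the analysis with arbitrary parameters, and then re-deriving the resulting conditions on $\alpha$. First I would fix a chunk $\omega$ and restrict attention to the packets pertaining to it, setting $N_T=(1+\gamma_c)k/p$ with $p\doteq\min_i p_i$, so that the per-link per-partition means become $\varphi_i=(1+\gamma_c)p_i\alpha/(p\,w)$, each $\Theta(1)$ since $w=O(1)$. As in Section~\ref{subsec:CCCAPP}, a Chernoff bound gives $\Pr\{\varphi_{ij}<(1-\gamma^{*}_i)\varphi_i\}\le e^{-{\gamma^{*}_i}^2\dot{\varphi}_i}$, and a union bound over the $w_T$ active partitions lets me assume $\varphi_{ij}\ge r_i\doteq(1-\gamma^{*}_i)\varphi_i$ for all $i,j$ w.p.\ b.a.b.\ $\dot{\gamma}_b$, with $\gamma^{*}_i=O(1)$. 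The point of the unequal-parameters regime is that here one invokes Lemma~\ref{lem:HorizontalT} (rather than Lemma~\ref{lem:VerticalT}) on the RBLT structure of the transfer matrices, because the ``horizontal'' block dimensions $r\le r_j$ are the relevant ones when consecutive links have distinct rates separated by $\gamma_e\doteq\min_{1<i\le L}|p_i-p_{i-1}|$; this is exactly the substitution already used to pass from Theorem~\ref{thm:CapAchCodDelGeneral} to Theorem~\ref{thm:CapAchCodDelSpecial}.

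Next I would carry the density recursion down the line. Using Lemma~\ref{lem:HorizontalT} with the chosen $r_i$'s, one shows that $\mathcal{D}(Q_1^j)\ge r_1 j$ for all $j$, and for $i>1$ the inequality $\mathcal{D}(Q_i^j)\ge r_i j - (\text{loss terms of order }j\log(w_T/\dot{\gamma}_b))$ fails w.p.\ b.a.b.\ $ij\dot{\gamma}_b/w_T$, provided a condition of the shape $\alpha=\Omega(w^2\log(w_T/\gamma_b))$ holds but now with the extra factor $1/\gamma_e^3$ (or equivalently $\gamma_e^{-3}$) appearing because each block contributes a deficit controlled by $\gamma_e$ rather than by $1$. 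Propagating to the sink, the number of dense packets pertaining to $\omega$ fails to be at least
\begin{equation*}
\frac{w_T\varphi}{L}-O\!\left(\frac{w_T}{L}\sqrt{\varphi\log\frac{w_T}{\gamma_b}}\right)-O\!\left(\frac{w_T}{L}\log\frac{w_T}{\gamma_b}\right)
\end{equation*}
w.p.\ b.a.b.\ $\gamma_b$, with $\varphi\doteq\min_i\varphi_i$. I would then set $w\sim\sqrt[3]{\gamma_e^3\alpha L^2/\log(\alpha L/\gamma_b)}$ to maximize this, which (after substituting back) turns the feasibility condition into $\alpha=\Omega\!\big((L/\gamma_e^3)\log(L/\gamma_e\gamma_b)\big)$; combining this with the bookkeeping/decoding condition $\alpha=\Omega\!\big((L/\gamma_c^3)\log(L/\gamma_b\gamma_c)\big)$ coming from requiring the dominant (second) $O(\cdot)$ term to stay below $\alpha+\log(1/\dot{\gamma}_b)$ after replacing $\varphi$ by $(1+\gamma_c)\alpha/w$ gives precisely the stated $\max\{\cdot,\cdot\}$ lower bound on $\alpha$. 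Note that, in contrast to the arbitrary-parameters case, the term $L^4\log(L/\gamma_b)$ is \emph{replaced} by $(L/\gamma_e^3)\log(L/\gamma_e\gamma_b)$, which is the main qualitative difference and the reason the theorem is stated separately.

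Finally I would invoke the concentration argument of Lemma~\ref{lem:Concentration} verbatim (the martingale bound there only uses $\alpha=O(1)$ and the condition $\alpha^2/\gamma_a^2\gamma_b^2=o(k/\log(1/\epsilon))$, neither of which sees the distinction between equal and unequal parameters), so that the fraction of undecodable chunks exceeds $(1+\gamma_a)\gamma_b$ w.p.\ b.a.b.\ $\epsilon$; then the capacity-achieving precode of rate $1-(1+\gamma_a)\gamma_b$ applied to the $\big(1+(1+\gamma_a)\gamma_b+O(\gamma_b^2)\big)k$ intermediate packets recovers all $k$ message vectors, and rescaling $N_T$ by the overhead factor $\big(1+(1+\gamma_a)\gamma_b+O(\gamma_b^2)\big)$ yields the claimed delay $(1+\gamma_c)\big(1+(1+\gamma_a)\gamma_b+O(\gamma_b^2)\big)k/p$. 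The step I expect to be the main obstacle is verifying that Lemma~\ref{lem:HorizontalT} genuinely applies with the $\gamma_e$-dependent block parameters and that the $\gamma_e^{-3}$ (rather than $L^4$) scaling emerges cleanly from the optimization of $w$ — that is, tracking how $\gamma_e$ enters both the per-block rank deficit and the Chernoff slack simultaneously, and confirming the two constraints on $\alpha$ are the binding ones while the $\gamma_a,\gamma_b$ constraint from Lemma~\ref{lem:Concentration} is inherited unchanged. The rest is a routine re-run of the computations in the proof of Theorem~\ref{thm:CapAppCodDelGeneral} with $L^4\log(L/\gamma_b)$ swapped for $(L/\gamma_e^3)\log(L/\gamma_e\gamma_b)$.
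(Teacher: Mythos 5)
Your overall architecture is the paper's: rerun Theorem~\ref{thm:CapAppCodDelGeneral} with the unequal-parameter machinery (Lemma~\ref{lem:HorizontalT} and link-dependent $r_i\doteq(1-\gamma^*_i)\varphi_i$), then inherit Lemma~\ref{lem:Concentration} and the precoding step verbatim; that last part you handle exactly as the paper does. The gap is precisely the step you flag as the ``main obstacle'' and then leave unresolved: the form of the condition on $\alpha$ under which Lemma~\ref{lem:HorizontalT} yields the density recursion, and the matching choice of $w$. In the paper, the whole benefit of unequal parameters is that $\mathcal{D}(Q_i^j)\geq r_i j$ holds with \emph{no} per-partition logarithmic loss, so the sink bound~\eqref{eq:Temp25} has only two correction terms (your third term $O\bigl((w_T/L)\log(w_T/\gamma_b)\bigr)$ is an artifact of the arbitrary-parameter analysis~\eqref{eq:Temp11}), and, crucially, the required condition is~\eqref{eq:Temp24}, $\alpha=\Omega\bigl((w/\gamma_e^2)\log(w_T/\gamma_b)\bigr)$ --- \emph{linear} in $w$, unlike the $w^2$-shaped condition~\eqref{eq:Temp10} of the arbitrary case. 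You instead posit ``the shape $\alpha=\Omega(w^2\log(w_T/\gamma_b))$ with an extra factor $1/\gamma_e^3$''; but substituting any $w=\Theta\bigl((\alpha L^2/\log(\cdot))^{1/3}\bigr)$ into a $w^2$-shaped condition returns $\alpha=\Omega\bigl(L^4\log(\cdot)\bigr)$ up to $\gamma_e$ factors, not the stated $\Omega\bigl((L/\gamma_e^3)\log(L/\gamma_e\gamma_b)\bigr)$, so your intermediate condition contradicts the conclusion you claim ``emerges after substituting back.'' The correct route is: keep $w\sim\bigl(\alpha L^2/\log(w_T/\gamma_b)\bigr)^{1/3}$ (no $\gamma_e$ factor here, unlike the capacity-achieving Theorem~\ref{thm:CapAchCodDelSpecial}), plug it into the linear-in-$w$ condition~\eqref{eq:Temp24} to get~\eqref{eq:Temp26}, and require~\eqref{eq:Temp25} to exceed $\alpha+\log(1/\dot{\gamma_b})$ to get~\eqref{eq:Temp27}.

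Your $\gamma_e$-scaled choice $w\sim\gamma_e\bigl(\alpha L^2/\log(\cdot)\bigr)^{1/3}$ also fails to reproduce the theorem even once~\eqref{eq:Temp24} is in hand: it inflates the boundary term $O(\alpha L/w)$ by $1/\gamma_e$, so the decoding requirement becomes $\gamma_c\alpha\gtrsim \alpha^{2/3}L^{1/3}\log^{1/3}(\cdot)/\gamma_e$, i.e.\ $\alpha=\Omega\bigl(L\log(\cdot)/(\gamma_c^3\gamma_e^3)\bigr)$, which entangles $\gamma_c$ and $\gamma_e$ and is strictly stronger than the stated two-term $\Omega(\{\cdot,\cdot\})$ condition. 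So the skeleton of your argument matches the paper, but the $\gamma_e$-dependent RBLT condition and the optimization of $w$ --- the actual content distinguishing this theorem from Theorem~\ref{thm:CapAppCodDelGeneral} --- are mis-stated, and as written they would only support a weaker theorem with an $L^4$-type (or $\gamma_c^{-3}\gamma_e^{-3}$-type) lower bound on $\alpha$.
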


\begin{proof}Let us assume $p_1>p_2>\cdots> p_L$, without loss of generality. Let $p\doteq \min_{1\leq i\leq L} p_i$, $\gamma_e\doteq \min_{1<i\leq L}\gamma_{e_i}$, and $\gamma_{e_i}\doteq |p_i-p_{i-1}|$. Fix a chunk. Let $r_i\doteq (1-\gamma^{*}_i)\varphi_i$, where $\varphi_i=p_iN_T/wq$ and $\gamma^{*}_i\sim\sqrt{(1/\dot{\varphi_i})\log(w_T/\dot{\gamma_b})}$, and $0<\gamma_b<1$ is an arbitrary constant. Let $\varphi_{ij}$ be the number of packets (pertaining to the given chunk) in the partition $I_{ij}$ (the $j\textsuperscript{th}$ partition pertaining to the $i\textsuperscript{th}$ link), where the time interval $(0,N_T]$ is split into $w$ partitions of length $N_T/w$, and let $\varphi_i$ be the expected value of $\varphi_{ij}$. For all $i,j$, suppose that $\varphi_{ij}$ is larger than or equal to $r_i$. Let $N_T=(1+\gamma_c)k/p$, where $0<\gamma_c<1$ is an arbitrarily small constant. By replacing $N_T$ with $(1+\gamma_c)k/p$, $\varphi_i=(1+\gamma_c)p_i\alpha/pw$, and $\varphi=O(1)$, similar to that in the proof of Theorem~\ref{thm:CapAppCodDelGeneral}.

Similarly as before, for all $j$, $\mathcal{D}(Q_1^j)\geq r_1 j$. For any other values of $i,j$, by applying Lemma~\ref{lem:HorizontalT}, it can be shown that the inequality $\mathcal{D}(Q_i^j)\geq r_i j$ fails w.p. b.a.b. $ij\dot{\gamma_b}/w_T$, so long as \begin{equation}\label{eq:Temp24} \alpha=\Omega\left(\frac{w}{\gamma_e^2}\log\frac{w_T}{\gamma_b}\right).\end{equation}

Let $\varphi$, $\gamma^{*}$ and $r$ denote $\varphi_L$, $\gamma^{*}_L$ and $r_L$, respectively. Thus, the number od dense packets pertaining to the given chunk at the sink node fails to be larger than \begin{dmath}\label{eq:Temp25}\alpha(1+\gamma_c)-O\left(\frac{\alpha L}{w}\right)-O\left(\sqrt{\alpha w\log\frac{w_T}{\gamma_b}}\right).\end{dmath} We specify $w$ by \[\left(\frac{\alpha L^2}{\log(w_T/\gamma_b)}\right)^{\frac{1}{3}}\] to maximize~\eqref{eq:Temp25} subject to condition~\eqref{eq:Temp24}. For this choice of $w$, condition~\eqref{eq:Temp24} is met so long as \begin{equation}\label{eq:Temp26}\alpha=\Omega\left(\frac{L}{\gamma^3_e}\log\frac{L}{\gamma_e\gamma_b}\right).\end{equation} By replacing $\gamma_b$ with $\dot{\gamma_b}$ in the preceding results, and substituting $w$ in~\eqref{eq:Temp25}, the result of Lemma~\ref{lem:DenseRankProb} shows that the sink node fails to decode the given chunk w.p. b.a.b. $\gamma_b$, so long as~\eqref{eq:Temp25} is larger than $\alpha+\log(1/\dot{\gamma_b})$. Based on the properties of the notation $\Omega(.)$, the latter condition is met so long as \begin{equation}\label{eq:Temp27}\alpha=\Omega\left(\frac{L}{\gamma^3_c}\log\frac{L}{\gamma_b\gamma_c}\right).\end{equation}
The rest of the proof is similar to the proof of Theorem~\ref{thm:CapAppCodDelGeneral}, except that in this case conditions~\eqref{eq:Temp26} and~\eqref{eq:Temp27} need to be met, instead of conditions~\eqref{eq:Temp12} and~\eqref{eq:Temp13}.\end{proof}

\begin{theorem}\label{thm:CapAppAveCodDelSpecial} The average coding delay of a CCP with chunks of size $\alpha$ and a c.-a. erasure code of rate $1-\gamma_a$, over a network similar to Theorem~\ref{thm:CapAppCodDelSpecial} is larger than $(1+\gamma_c)\left(1+(1+\gamma_a)\gamma_b+O(\gamma^2_b)\right)\frac{k}{p}$, w.p. b.a.b. $\epsilon$, so long as \[\alpha=\Omega\left(\frac{L}{\gamma^2_e\gamma_{c}}\log\frac{L}{\gamma_b \gamma_c}\right),\] and $\alpha^2/\gamma^2_a\gamma^2_b=o(k/\log(1/\epsilon))$, where $0<\gamma_a,\gamma_b,\gamma_{c}<1$ are arbitrary constants.\end{theorem}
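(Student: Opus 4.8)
The plan is to re-run the proof of Theorem~\ref{thm:CapAppCodDelSpecial} with the single substitution that, throughout this paper, turns a coding-delay estimate into an average-coding-delay estimate: replace the Chernoff lower bound $r_i=(1-\gamma^{*}_i)\varphi_i$ on the number of successful packets of the fixed chunk per active partition over the $i\textsuperscript{th}$ link by its expectation $\varphi_i$, i.e.\ set $\gamma^{*}_i=0$. Concretely, fix a chunk, assume $p_1>p_2>\cdots>p_L$ without loss of generality, put $N_T=(1+\gamma_c)k/p$ and split $(0,N_T]$ into $w$ partitions; then $\varphi_i=p_iN_T/wq=(1+\gamma_c)p_i\alpha/(pw)=\Theta(1)$, with $|\varphi_i-\varphi_{i-1}|\geq(1+\gamma_c)\gamma_e\alpha/(pw)>0$. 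Because the $\varphi_i$ are pairwise distinct, the unequal-parameters assumption places the relevant transfer sub-matrices in the horizontal RBLT regime, so Lemma~\ref{lem:HorizontalT} applies with no slack term: $\mathcal{D}(Q_1^j)\geq\varphi_1 j$ for all $j$, and for $1<i\leq L$ the inequality $\mathcal{D}(Q_i^j)\geq\varphi_i j$ fails with probability at most $ij\dot{\gamma_b}/w_T$, provided condition~\eqref{eq:Temp24} holds --- exactly as in the proof of Theorem~\ref{thm:CapAppCodDelSpecial}.

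A union bound over the active partitions then shows that the number of dense packets of the fixed chunk at the sink fails to exceed $\varphi_L w_T/L=\alpha(1+\gamma_c)-O(\alpha L/w)$ with probability at most $\gamma_b$; relative to~\eqref{eq:Temp25}, the $O(\sqrt{\alpha w\log(w_T/\gamma_b)})$ term has vanished precisely because $\gamma^{*}_L=0$. With that term gone there is no incentive to keep $w$ small, so I would take $w$ as large as condition~\eqref{eq:Temp24} permits, $w\sim\gamma^2_e\alpha/\log(\alpha L/\gamma_b)$, whence $O(\alpha L/w)=O\big((L/\gamma^2_e)\log(\alpha L/\gamma_b)\big)$; since (as the decoding condition will confirm) $\alpha$ near the boundary satisfies $\gamma^2_e\alpha=O\big((L/\gamma_c)\log(\alpha L/\gamma_b)\big)$, one gets $w_T=O(L^2/\gamma_c)$ and the logarithmic factor simplifies to $\log(L/(\gamma_b\gamma_c))$. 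By Lemma~\ref{lem:DenseRankProb} with $\alpha,\dot{\gamma_b}$ in place of $k,\epsilon$, the fixed chunk is undecodable with probability at most $\gamma_b$ once this dense count exceeds $\alpha+\log(1/\dot{\gamma_b})$, which, after inserting the chosen $w$, holds so long as $\alpha=\Omega\big((L/\gamma^2_e\gamma_c)\log(L/(\gamma_b\gamma_c))\big)$ --- the single condition in the statement, which here subsumes both~\eqref{eq:Temp26} and~\eqref{eq:Temp27} of the coding-delay case.

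The remainder is a transcription of the proof of Theorem~\ref{thm:CapAppCodDelSpecial}, hence of Theorem~\ref{thm:CapAppCodDelGeneral}: the expected fraction of undecodable chunks is at most $\gamma_b$; the martingale argument behind Lemma~\ref{lem:Concentration} keeps it below $(1+\gamma_a)\gamma_b$ with probability at least $1-\epsilon$ under $\alpha^2/\gamma^2_a\gamma^2_b=o(k/\log(1/\epsilon))$; precoding with a capacity-achieving erasure code of rate $1-\gamma_a$ and running the CC on the $(1+(1+\gamma_a)\gamma_b+O(\gamma^2_b))k$ intermediate packets up to time $(1+\gamma_c)(1+(1+\gamma_a)\gamma_b+O(\gamma^2_b))\frac{k}{p}$ then lets the precode decoder recover all $k$ message vectors with probability at least $1-\epsilon$, giving the stated bound. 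The step needing the most care is this joint choice of $w$ and the bookkeeping that compresses condition~\eqref{eq:Temp24} together with the decoding inequality into the single clean bound on $\alpha$: since $w_T=L(w-L+1)$ depends on $w$, one must verify that driving $w$ to the boundary of~\eqref{eq:Temp24} is not circular and that the logarithmic factor genuinely collapses to $\log(L/(\gamma_b\gamma_c))$ rather than retaining an extra $\log\alpha$ or a stray $\gamma_e$. Everything else is the unequal-parameters coding-delay proof with $r_i$ replaced by $\varphi_i$.
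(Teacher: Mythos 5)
Your proposal is correct and follows essentially the same route as the paper: rerun the proof of Theorem~\ref{thm:CapAppCodDelSpecial} with $r_i$ replaced by $\varphi_i$, so that the $O\left(\sqrt{\alpha w\log(w_T/\gamma_b)}\right)$ term of~\eqref{eq:Temp25} disappears and the dense count per chunk becomes $\alpha(1+\gamma_c)-O(\alpha L/w)$ subject to condition~\eqref{eq:Temp24}, then carry over the Lemma~\ref{lem:DenseRankProb}, Lemma~\ref{lem:Concentration} and precoding steps unchanged. The only (harmless) divergence is the specification of $w$: the paper takes $w=\Omega(L/\gamma_c)$, under which \eqref{eq:Temp24} immediately becomes $\alpha=\Omega\left((L/(\gamma_e^2\gamma_c))\log(L/(\gamma_b\gamma_c))\right)$ with a clean logarithmic factor because $w_T=\Theta(L^2/\gamma_c)$, whereas you push $w$ to the boundary of~\eqref{eq:Temp24}, which makes that condition automatic but moves the burden to the decoding margin and leaves a $\log(\alpha L/\gamma_b)$ factor requiring your extra collapse argument; both choices lead to the same final condition on $\alpha$.
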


\begin{proof}\renewcommand{\IEEEQED}{}The proof follows the same line as that of Theorem~\ref{thm:CapAppCodDelSpecial}, except that the choice of $w$ needs to maximize \begin{equation}\label{eq:Temp28} \alpha(1+\gamma_c)-O\left(\frac{\alpha L}{w}\right)\end{equation} subject to condition~\eqref{eq:Temp24}. To do so, the choice of $w$ needs to be $\Omega(L/\gamma_c)$, and hence, condition~\eqref{eq:Temp24} becomes \[\hspace{2.45in}\alpha=\Omega\left(\frac{L}{\gamma^2_e\gamma_c}\log\frac{L}{\gamma_b\gamma_c}\right).\hspace{2.45in}\IEEEQEDopen\]
\end{proof}

\vspace{-.25 cm}
\section{Poisson Transmissions and Bernoulli Losses}\label{sec:PoissonTrafficCC}
In the case of Bernoulli losses and Poisson transmissions with parameters $\{p_i\}_{1\leq i\leq L}$ and $\{\lambda_i\}_{1\leq i\leq L}$, the points in time at which the arrivals/departures occur over the $i\textsuperscript{th}$ link follow a Poisson process with parameter $\lambda_i p_i$. Thus the number of packets pertaining to a given chunk, in each partition pertaining to the $i\textsuperscript{th}$ link, has a Poisson distribution with the expected value $\lambda_i p_i N_T/wq$. Since the result of Chernoff bound also holds for Poisson random variables, the main results in Section~\ref{sec:BernoulliLossRegularTrafficCC} apply to this case by replacing $p$ with $\lambda p$, where $\{\lambda,p\}\doteq \{\lambda_{\mu},p_{\mu}\}$, and $\mu\doteq\arg\min_{1\leq i\leq L}\lambda_i p_i$.

\vspace{-.25cm}
\section{Discussion}\label{sec:Discussion}
Table~\ref{tab:TableI} shows the upper bounds\footnote{With a slight abuse of language, we refer to the ``upper bound'' on the overhead or the average overhead as the ``overhead'' or the ``average overhead.''} (w.p. of failure b.a.b. $\epsilon$) on the overhead and the average overhead of CC over various traffics for different ranges of the size of the chunks based on the results in Section~\ref{sec:BernoulliLossRegularTrafficCC} and those in~\cite{HBJ:2011}.\footnote{The results of Section~\ref{subsec:CCCACH} and those of Section~\ref{subsec:CCCAPP} were stated in terms of $q$ and $\alpha$, respectively. In this section, for the ease of comparison, the former results are also restated in terms of $\alpha$ by replacing $q$~with~$k/\alpha$.} The traffics are: arbitrary deterministic traffics, or traffics with deterministic regular transmissions and Bernoulli losses. We refer to the latter traffics as the \emph{probabilistic traffics} for simplifying the terminology. The probabilistic traffics are categorized into two sub-categories: traffics with arbitrary success parameters and traffics with unequal success parameters. In the case of arbitrary deterministic traffics, the capacity is $1$, and in the case of probabilistic traffics with success parameters $\{p_i\}_{1\leq i\leq L}$, the capacity is $p$, where $p=\min_{1\leq i\leq L}p_i$. We say that a code is ``capacity-achieving'' (c.-a.) if the ratio of the overhead to $k/p$ goes to $0$, as $k$ goes to infinity. Similarly, a code is ``capacity-achieving on average'' (c.-a.a.) if the ratio of the average overhead to $k/p$ goes to $0$, as $k$ goes to infinity. In Table~\ref{tab:TableI}, the upper (or lower) row in front of each case of success parameters corresponds to a c.-a. (or a c.-a.a.) scenario.



In the table, one can see that, for each traffic, the size of the chunks ($\alpha$) has to be sufficiently large so that CC are c.-a. or c.-a.a.. For arbitrary deterministic traffics, the lower bound on $\alpha$ is super-logarithmic in $k$, i.e., $\omega(\log k)$, and super-log-cubic in $L$, i.e., $\omega({L^3\log L})$. For the probabilistic traffics with arbitrary or unequal success parameters, the lower bound on $\alpha$ has a similar growth rate with $k$, but a smaller (super-log-linear) growth rate with $L$, i.e., $\omega({L\log L})$. The coding cost of CC (i.e., the ratio of the number of the coding (packet) operations to $k$), is, on the other hand, linear in $\alpha$. Thus, CC can perform as fast over both the arbitrary deterministic traffics and the probabilistic traffics, but with a lower coding cost (smaller chunks) in the latter case compared to the former.

Moreover, as it can be seen in Table~\ref{tab:TableI}, for both arbitrary deterministic and probabilistic traffics (in each case of arbitrary or unequal success parameters), the overhead grows sub-log-linearly with $k$, i.e., $O(k\log^{\frac{1}{3}} k)$, and decays sub-linearly with $\alpha$, i.e., $O(1/\alpha^{\frac{1}{3}})$. However, for arbitrary deterministic traffics, the overhead grows with $O(L\log^{\frac{1}{3}}L)$, and for the probabilistic traffics, it only grows with $O(L^{\frac{1}{3}}\log^{\frac{1}{3}} L)$. This implies a faster speed of convergence to the capacity in the latter case compared to the former. Similar comparison result can also be observed in terms of the average overhead, except that in the case of unequal success parameters, the average overhead decays linearly with $\alpha$, i.e., $O(1/\alpha)$, but grows poly-log-linearly with $k$, i.e., $O(k\log^2 k)$, for the choice of $f(k)=O(\gamma_e \log k)$, and log-linearly with $L$, i.e., $O(L\log L)$.

Table~\ref{tab:TableII} also shows the results for CC with precoding (CCP) in the scenarios similar to those considered in Table~\ref{tab:TableI}, where the precode is an (capacity-achieving) erasure code of dimension $k$ and rate $1-\gamma_a$. In particular, one can see that CCP are ``capacity-approaching'' or ``capacity-approaching on average'' with an arbitrary small ``non-zero constant'' gap $\gamma_o$ (i.e., the ratio of the overhead or the average overhead to $k/p$ goes to $\gamma_o$, as $k$ goes to infinity) if $\alpha$ is sufficiently large. For simplifying the terminology, we drop the term ``with a non-zero constant gap.'' The upper row (or the lower row) in front of each case of success parameters corresponds to a capacity-approaching (or a capacity-approaching on average) scenario. For arbitrary deterministic traffics, the lower bound on $\alpha$ is constant in $k$, and log-cubic in $L$, i.e., $O(L^3\log L)$. For the probabilistic traffics with arbitrary or unequal success parameters, the lower bound on $\alpha$ is also constant in $k$, but has a smaller (log-linear) growth rate with $L$, i.e., $O(L\log L)$. Thus, in the case of CCP, one can make a conclusion similar to the one made in the case of stand-alone CC, with respect to the arbitrary deterministic and the probabilistic traffics.


\bibliographystyle{IEEEtran}
\bibliography{RefsII}

\begin{thebibliography}{1}
\providecommand{\url}[1]{#1}
\csname url@samestyle\endcsname
\providecommand{\newblock}{\relax}
\providecommand{\bibinfo}[2]{#2}
\providecommand{\BIBentrySTDinterwordspacing}{\spaceskip=0pt\relax}
\providecommand{\BIBentryALTinterwordstretchfactor}{4}
\providecommand{\BIBentryALTinterwordspacing}{\spaceskip=\fontdimen2\font plus
\BIBentryALTinterwordstretchfactor\fontdimen3\font minus
  \fontdimen4\font\relax}
\providecommand{\BIBforeignlanguage}[2]{{%
\expandafter\ifx\csname l@#1\endcsname\relax
\typeout{** WARNING: IEEEtran.bst: No hyphenation pattern has been}%
\typeout{** loaded for the language `#1'. Using the pattern for}%
\typeout{** the default language instead.}%
\else
\language=\csname l@#1\endcsname
\fi
#2}}
\providecommand{\BIBdecl}{\relax}
\BIBdecl

\bibitem{MHL:2006}
P.~Maymounkov, N.~Harvey, and D.~Lun, ``{Methods for {E}fficient {N}etwork
  {C}oding},'' in \emph{Proc. 44th Annual Allerton Conference on Communication
  Control and Computing}, 2006, pp. 482--491.

\bibitem{SZK:2009}
D.~Silva, W.~Zeng, and F.~Kschischang, ``Sparse network coding with overlapping
  classes,'' in \emph{Network Coding, Theory, and Applications, 2009. NetCod
  '09. Workshop on}, June 2009, pp. 74 --79.

\bibitem{HBJ:2011}
\BIBentryALTinterwordspacing
A.~Heidarzadeh and A.~Banihashemi, ``Network {C}odes with {O}verlapping
  {C}hunks over {L}ine {N}etworks: {A} {C}ase for {L}inear-{T}ime {C}odes,''
  \emph{Submitted to IEEE Trans. Info. Theory}, May 2011. [Online]. Available:
  \url{http://arxiv.org/abs/1105.5736}
\BIBentrySTDinterwordspacing

\bibitem{PFS:2005}
P.~Pakzad, C.~Fragouli, and A.~Shokrollahi, ``Coding {S}chemes for {L}ine
  {N}etworks,'' in \emph{Proc. {IEEE} {I}nt. {S}ymp. {I}nfo. {T}heory,
  {ISIT}'05}, 2005.

\bibitem{DDHE:2009}
T.~Dikaliotis, A.~Dimakis, T.~Ho, and M.~Effros, ``On the {D}elay of {N}etwork
  {C}oding over {L}ine {N}etworks,'' in \emph{Proc. {IEEE} {I}nt. {S}ymp.
  {I}nfo. {T}heory, {ISIT}'09}, 28 2009-july 3 2009, pp. 1408--1412.

\bibitem{HB1S:2012}
\BIBentryALTinterwordspacing
A.~Heidarzadeh and A.~Banihashemi, ``How {F}ast {C}an {D}ense {C}odes {A}chieve
  the {M}in-{C}ut {C}apacity of {L}ine {N}etworks?'' \emph{Submitted to IEEE
  ISIT'12}, Jan. 2012. [Online]. Available:
  \url{http://arxiv.org/abs/1202.0343}
\BIBentrySTDinterwordspacing

\end{thebibliography}

\end{document}